\documentclass[pdflatex,sn-mathphys]{sn-jnl}

\usepackage[utf8]{inputenc}
\usepackage[T1]{fontenc}
\usepackage{amsmath,amsfonts,amsthm,amssymb}


\usepackage{xfrac}

\usepackage{xspace}
\usepackage{enumerate}
\usepackage{multicol}
\usepackage{lmodern}

\usepackage[algoruled, vlined,algo2e,linesnumbered]{algorithm2e}

\usepackage{setspace}
\usepackage{thmtools} 					

\usepackage{footnote}
\makesavenoteenv{tabular}
\makesavenoteenv{table}

\usepackage{adjustbox}
\newcommand{\zell}[2][c]{ \begin{tabular}[#1]{@{}c@{}}#2\end{tabular}}

\usepackage{hyperref}
\usepackage
[
    noabbrev,   
    nameinlink, 
]
{cleveref} 

\crefname{case}{case}{cases}
\creflabelformat{stat}{#2{\upshape#1}#3}

\crefname{ineq}{inequality}{inequalities}
\creflabelformat{ineq}{#2{(\upshape#1)}#3} 

\crefname{alg}{algorithm}{algorithms}
\creflabelformat{alg}{#2{#1}#3}

\normalbaroutside

\newcommand*{\Otilde}{\widetilde{O}}

\newcommand*{\D}{\mathcal{D}}

\renewcommand*{\L}{\mathcal{L}}
\renewcommand*{\P}{\mathcal{P}}
\renewcommand*{\T}{\mathcal{T}}

\newcommand*{\poly}{\textsf{poly}}
\newcommand*{\polylog}{\textsf{polylog}}

\newcommand*{\nwspace}{\hspace*{.1em}} 

\DeclareMathOperator{\diam}{diam}
\DeclareMathOperator*{\argmax}{arg\,max}

\makeatletter
\DeclareRobustCommand{\cev}[1]{%
  {\mathpalette\do@cev{#1}}%
}
\newcommand{\do@cev}[2]{%
  \vbox{\offinterlineskip
    \sbox\z@{$\m@th#1 x$}%
    \ialign{##\cr
      \hidewidth\reflectbox{$\m@th#1\vec{}\mkern4mu$}\hidewidth\cr
      \noalign{\kern-\ht\z@}
      $\m@th#1#2$\cr
    }%
  }%
}
\makeatother

\let\oldsqrt\sqrt
\def\hksqrt{\mathpalette\DHLhksqrt}
\def\DHLhksqrt#1#2{\setbox0=\hbox{$#1\oldsqrt{#2\,}$}\dimen0=\ht0
   \advance\dimen0-0.2\ht0
   \setbox2=\hbox{\vrule height\ht0 depth -\dimen0}%
   {\box0\lower0.4pt\box2}}
\renewcommand\sqrt\hksqrt

\renewcommand{\leq}{\leqslant}
\renewcommand{\geq}{\geqslant}
\renewcommand{\le}{\leqslant}
\renewcommand{\ge}{\geqslant}
\newcommand{\eps}{\varepsilon}

\usepackage{tikz}

\newcommand*{\trlength}{1.2}
\newcommand*{\noderadius}{6.5pt}
\usetikzlibrary{arrows.meta, quotes, calc, fit, decorations.pathreplacing}
\tikzset{
node distance={\trlength cm}, vert/.style = {draw, circle, inner sep = 0 pt, minimum size = 2 * \noderadius}, every path/.style = {}, every label/.append style={rectangle, font = {\normalsize}}, fac/.style = {circle,fill,inner sep=1.5pt}, every node/.style = {font = {\normalsize}}, every edge quotes/.style = {auto, font = {\normalsize}, inner sep = 2pt}
}

\definecolor{cred}{HTML}{D81B60}
\definecolor{cblue}{HTML}{1E88E5}
\definecolor{cyellow}{HTML}{D09C00}
\definecolor{cgreen}{HTML}{5B8600}

\jyear{2026}%

\theoremstyle{thmstyleone}%
\newtheorem{theorem}{Theorem}
\newtheorem{lemma}[theorem]{Lemma}

\newtheorem*{question*}{Question}

\raggedbottom

\providecommand{\ignore}[1]{}

\begin{document}

\title[Fault-Tolerant ST-Diameter Oracles]{Fault-Tolerant ST-Diameter Oracles\footnote[2]{%
	An extended abstract of this work appeared at ICALP 2023~\cite{Bilo23STdiameter}.}}

\author[1]{\fnm{Davide} \sur{Bilò}}\email{davide.bilo@univaq.it}
\author[2]{\fnm{Keerti} \sur{Choudhary}}\email{keerti@iitd.ac.in}
\author[3]{\fnm{Sarel} \sur{Cohen}}\email{sarel.cohen@runi.ac.il}
\author[4]{\fnm{Tobias} \sur{Friedrich}}\email{tobias.friedrich@hpi.de}
\author[4]{\fnm{Simon} \sur{Krogmann}}\email{simon.krogmann@hpi.de}
\author*[5]{\fnm{Martin} \sur{Schirneck}}\email{martin.schirneck@kit.edu}

\affil[1]{\orgdiv{Department of Information Engineering, Computer Science and Mathematics},
\orgname{University of L'Aquila}, \country{Italy}}

\affil[2]{\orgdiv{Department of Computer Science and Engineering},
\orgname{Indian Institute of Technology Delhi}, \country{India}}

\affil[3]{\orgdiv{Efi Arazi School of Computer Science},
\orgname{Reichman University}, \country{Israel}}

\affil[4]{\orgdiv{Hasso Plattner Institute},
\orgname{University of Potsdam}, \country{Germany}}

\affil[5]{\orgdiv{Derpartment of Informatics},
\orgname{Karlsruhe Institute of Technology}, \country{Germany}}

\abstract{%
	Given two vertex sets $S$ and $T$ in a graph, the $ST$-diameter is
	the maximum $s$-$t$-distance between vertices $s \in S$ and $t \in T$.
	We study the problem of estimating the $ST$-diameter of graphs 
	that are subject to a small number of transient edge failures.
	An \emph{$f$-edge fault-tolerant $ST$-diameter oracle} ($f$-FDO-$ST$) is a data structure that preprocesses a graph $G$, sets $S$, $T$, and a positive integer $f$. 
	When queried with a set $F$ of at most $f$ failing edges, the oracle returns an estimate $\widehat{D}$ of the $ST$-diameter in $G\,{-}\,F$. 
	The oracle is said to have stretch $\sigma \geq 1$ if $\diam(G{-}F,S,T) \leq \widehat{D} \leq \sigma \cdot \diam(G{-}F,S,T)$.
	\vspace*{.5em}

	We design new $f$-FDO-$ST$s by reducing their construction
	to that of all-pairs and single-source \emph{distance sensitivity oracles} ($f$-DSOs).
	These are data structures that estimate the pairwise graph distances,
	or respectively the distances from a distinguished source, under up to $f$ failures.
	We obtain several new trade-offs between the size of the $ST$-diameter oracles,
	their stretch guarantees, query and preprocessing times 
	by combining our black-box reductions with $f$-DSO results from the literature.
	\vspace*{.5em}

	We fruther provide a lower bound on the space requirement of approximate 
	$ST$-diameter oracles. 
	We prove that there exists a family of graphs for which any $f$-FDO-$ST$ 
	with sensitivity $f \ge 2$ and stretch better than $5/3$ 
	requires $\Omega(n^{3/2})$ bits of space, regardless of the query time.
}

\keywords{diameter oracles, distance sensitivity oracles, space lower bounds, fault-tolerant data structures}

\maketitle

\section{Introduction}
\label{sec:intro}

The diameter of a graph is the largest distance between any two of its vertices.
It is one of the most fundamental graph parameters
as it measures how fast information can spread in a network
or how quickly any other node can be visited.
The problem of computing, or at least approximating, the diameter in a time-efficient manner has been extensively studied, see e.g.~\cite{Aingworth,AnconaHRWW19,BackursRSWW21,Chechik:2014,ChoudharyG20,CrescenziGLM12,Roditty16ApproxDiamEncycAlg,RodittyW:2013,TakesK11}.
We continue this investigation through the lens of fault tolerance,
assuming that the graph undergoes a small number of transient edge failures.
The interest in this setting stems from the fact that most real-world networks are prone to errors. 
These failures, though unpredictable, are temporary due to some simultaneous repair process that is 
undertaken in these applications.
This has motivated research on the design of 
$f$-\emph{edge fault-tolerant oracles}, 
which are a compact data structures that can quickly report the desired solution or graph property after up to $f$ links failed in the network.
The \emph{sensitivity} parameter $f$ describes the oracle's robustness against failures.
In the last two decades, a lot of work went into constructing fault-tolerant
oracles for graph problems
with a special focus on connectivity~\cite{ChakrabortyC20,DuanP09a,DuanP10}
and finding shortest paths~\cite{AlonChechikCohen19CombinatorialRP,BCGLPP18,%
GrandoniVWilliamsFasterRPandDSO_journal,BaswanaCHR20,BaKa_FOCS06,%
BiloG0P22Algorithmica,CLPR10,DeThChRa08,DuanGR21,DuRe22,WY13}. 

The landscape for extremal distances like the diameter is far less explored. 
We are only aware of three articles in that direction.
The problem of designing fault-tolerant diameter oracles was originally raised by 
Henzinger, Lincoln, Neumann, and Vassilevska Williams~\cite{HenzingerL0W17}.
It has recently been studied by Bilò, Cohen, Friedrich, and Schirneck~\cite{BCFS21DiameterOracle_MFCS} and the same authors together with Choudhary~\cite{Bilo22Extremal}.
For a more detailed discussion, see \Cref{subsec:relatedWork-fdos}. 

We broaden the scope to the notion of $ST$-diameter 
that was introduced and studied in recent years.
For a given graph $G = (V,E)$ and two sets $S,T \subseteq V$ of vertices,
the $ST$-\emph{diameter} $\diam(G,S,T) = \max_{s \in S, t \in T} d_{G}(s,t)$
is the maximum distance between vertices in $S$ and $T$.
Clearly, when choosing $S = T = V$, we recover the (ordinary) graph diameter.
All previous works on the $ST$-diameter focus on static graphs.
For example, Backurs, Roditty, Segal, Vassilevska Williams, and Wein~\cite{BackursRSWW21} proved that for any undirected graph one can deterministically compute a $3$-approximation of the $ST$-diameter
in time $O(m)$.
They also provided a randomized $2$-approximate algorithm in time $\Otilde(m\sqrt{n})$ time.\footnote{%
	For a non-negative function $g(n,m,f)$, we write $\Otilde(g)$ for $O(g \cdot \polylog(n))$.
} 
Dalirrooyfard, Vassilevska Williams, Vyas, and Wein~\cite{DalirrooyfardW019a} investigated the problem of computing the bi-chromatic $ST$-diameter, 
the special case where the sets $S$ and~$T$ partition $V$.

When applying the fault-tolerant model to the $ST$-diameter, we get the following notions. 
An \emph{$f$-edge fault-tolerant $ST$-diameter oracle} ($f$-FDO-$ST$) is a data structure that
stores information about the input graph $G$ in a preprocessing step. 
When queried with a set $F \subseteq E$ of at most $f$ edges, the oracle returns 
an estimate $\widehat{D}$ of the $ST$-diameter of the graph $G\,{-}\,F$.
This is the maximum $s$-$t$-distances, for $s \in S$ and $t \in T$,
under the condition that a shortest path realizing this distance
cannot use a failing edge from $F$.
For the (ordinary) diameter, that is, when $S=T=V$, the data structure is called an \emph{$f$-edge fault-tolerant diameter oracle} ($f$-FDO).
\Cref{fig:problem} illustrates this problem,
we use this graph as a running example.
We say an oracle has \emph{stretch} $\sigma \geq 1$ if the value $\widehat{D}$ returned upon query $F$ satisfies $\diam(G{-}F,S,T) \leq \widehat{D} \leq \sigma \cdot \diam(G{-}F,S,T)$.
Other important performance parameters of such a data structure are its \emph{query time}
as well as its \emph{space} requirement.\footnote{
	Unless otherwise stated, the space of a data structure is given in the $O(\log n)$-bit word RAM model
	by the number of stored machine words.
}
We also consider the time needed to \emph{preprocess} the oracle as a secondary measure.

\begin{figure}[t]
	\centering
	\includegraphics[scale=1,page=2]{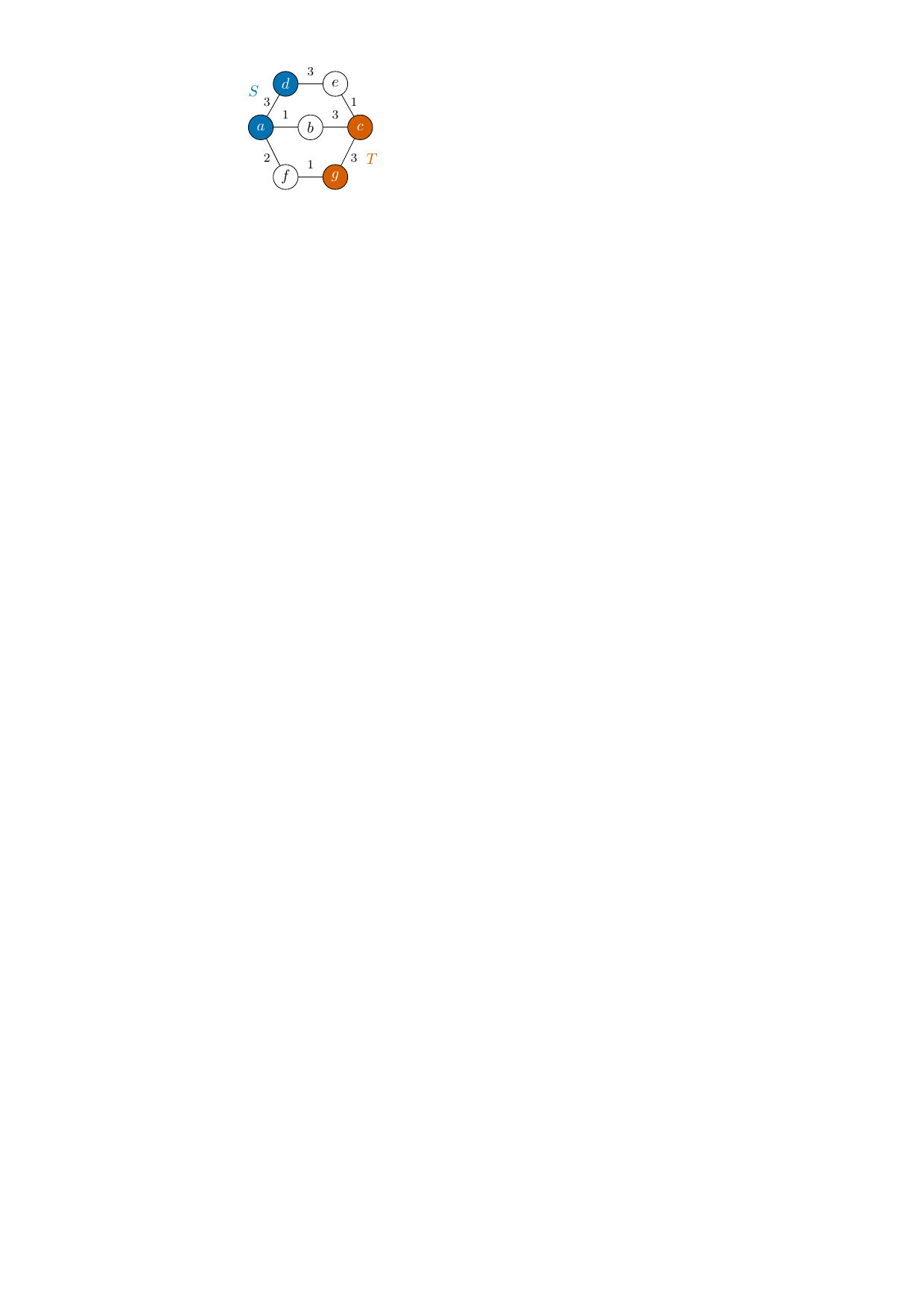}
	\caption{The edge-weighted graph we use as a running example. 
	It has unique shortest paths and distinguished vertex sets $S = \{a,d\}$ and $T = \{c,g\}$.
	The \mbox{$ST$-diameter} is $6$, realized by the shortest path from vertex $d$ to $g$ (in green).
	If the edges $\{b,c\}$ and $\{f,g\}$ fail (red crosses), the \mbox{$ST$-diameter} lengthens to $10$.
	When queried with the two edges,
	an $f$-edge fault-tolerant $ST$-diameter oracle with stretch $\sigma = 2$
	returns a value between $10$ and $20$.}
\label{fig:problem}
\end{figure}

Since there is a plethora of work on fault-tolerant oracles 
for shortest paths and the $ST$-diameter is somewhat  related to that,
it is a natural question whether we can convert the results on distance computation under failures into 
oracles for the $ST$-diameter without sacrificing too much performance.

\vspace*{-1em}
\begin{question*}
	Are there black-box reductions from fault-tolerant $ST$-diameter oracles
	to distance oracles without considerable overhead in stretch, query time, and space?
\end{question*}
\vspace*{-1em}

The epitome of fault-tolerant data structures for shortest-path distances are 
\emph{$f$-edge fault-tolerant distance sensitivity oracles} ($f$-DSOs).
The more common \emph{all-pairs} variety can be queried with a triplet $(s,t,F)$ where $s$, $t$ are any two vertices
and $F$ is a set of at most $f$ edges.
The oracle then returns an estimate of the \emph{replacement distance} $d_{G-F}(s,t)$
between $s$ and $t$.
A \emph{single-source} $f$-DSO has a designated source vertex $s$
and the queries only specify the target $t$ and failures $F$.
If the variant is not explicitly stated, we mean all pairs.

When using distance sensitivity oracles to build data structures for the $ST$-diameter
it is important to insist on efficiency.
Consider access to some $f$-DSO $\D$.
Naively, whenever the $ST$-diameter oracle receives a query set $F$, it could just invoke $\D$ for all
pairs of vertices from $S \nwspace{\times}\nwspace T$ and report their maximum replacement distance.
The resulting $|S||T| \nwspace \mathtt{Q}$ query time, where $\mathtt{Q}$ denotes the query time of $\D$,
is prohibitive if $S$ and $T$ are large.
Instead, all our query times are bounded by a function of the sensitivity $f$ 
and query time $\mathtt{Q}$ only.

We show how to employ both kinds of DSOs 
in the construction of fault-tolerant $ST$-diameter oracles.
All our reductions are oblivious to (positive) edge weights, 
but most results assume the input graph $G$ to be undirected.

\begin{table}[t]
\setstretch{1.2}
\centering
\caption{Properties of the fault-tolerant $ST$-diameter oracles for undirected graphs 
obtained via \Cref{thm:st-diameter-where-st-are-sets} using all-pairs distance sensitivity oracles
from the literature.
The maximum edge weight for graphs with arbitrary positive weights is denoted $W$,
for integer-weighted graphs it is $M$.
The parameter $k \ge 1$ is a positive integer, $\varepsilon > 0$ a positive real, 
and $\alpha \in [0,1]$ is a real number in the unit interval.
We use $c > 1$ for a constant, and $\omega < 2.371339$ for the matrix multiplication exponent.}
\label{table:FDO-ST_from_all-pairs_DSO}
\begin{adjustbox}{max width=\textwidth}
\begin{tabular}{ccccc}
\noalign{\hrule height 1pt}

\bf Sensitivity & \bf Stretch & \bf Space & \bf Query time & 
\bf Ref. \\
\noalign{\hrule height 1pt}\\[-10pt]

1 & 4& $\Otilde(n^2)$ & $O(1)$ & \cite{BeKa08,BeKa09}\\[.5em]

1 & 4& $\Otilde(n^2)$ & $O(1)$ & \cite{KarczmarzSankowski23SensitivityandDynamicDOs}\\[.5em]

1 & $(6k{-}3)(1{+}\eps)+1$ & 
$\Otilde(n^{3/2}+k^5 n^{1+1/k}/\eps^{4})$
& $O(1)$ & \cite{BaswanaK13}\vspace*{.5em}\\
\noalign{\hrule height 0.25pt}\\[-10pt]

2 & 4& $\Otilde(n^2)$ & 
$\Otilde(1)$ & \cite{DuanP09a}\vspace*{.25em}\\
\noalign{\hrule height 0.25pt}\\[-10pt]

$f=O(1)$ & 
$10+\varepsilon$ & 
$\Otilde(n^{2-\frac{\alpha}{2(f+1)}} (\log(n)/\varepsilon)^{O(f)})$ 
& $O(n^{\frac{\alpha}{2}}/\varepsilon^2)$ 
& \cite{BCCCFKS24TheoretiCS}\vspace*{.5em}\\
\noalign{\hrule height 0.25pt}\\[-10pt]



$f = o(\frac{\log n}{\log \log n})$ & 
4 & 
$\Otilde(n^{3-\alpha})$ & 
$\Otilde(f^2n^{2-(1-\alpha)/f})$
& \cite{WY13}\\[.5em]

$f = o(\frac{\log n}{\log \log n})$ & 
 $4+\varepsilon$ & 
$O(f n^{2+o(1)} \log(W)/\varepsilon^{f})$ & 
$\Otilde(f^7 \log \log W)$
& \cite{ChCoFiKa17}\vspace*{.5em}\\
\noalign{\hrule height 0.25pt}\\[-10pt]

$f \geq 1$ & 
4 & 
$O(fn^4)$ & 
$f^{O(f)}$ & \cite{DuRe22}\\[.5em] 

$f \ge 1$ &  
4 & 
$O(f^4n^2\log(Mn))$ &  
$O((c f \log(Mn))^{O(f^2)})$ &  
\cite{DeyGupta24}\\[.5em]

$f \geq 1$ & 
4 & 
$O(Mn^{2+\alpha})$ & 
$\Otilde(f^4 Mn^{2-\alpha} + f^{\omega+2} Mn)$
& \cite{BrSa19}\\[.5em]

$f \geq 1$ &  $(24k{-}6)(f{+}1)+1$ &
$O(n^{3/2+o(1)}+ fkn^{1+1/k}\log{(Wn)})$
&
$\Otilde( 2^f \nwspace f^2)$
& \cite{CLPR10}\\[.5em]

$f \geq 1$ &  $(24k{-}6)(f{+}1)+1$ &
$O(n^{2}+ fkn^{1+1/k}\log{(Wn)})$
&
$\Otilde(f^3)$
& \cite{CLPR10}
\end{tabular}
\end{adjustbox}
\end{table}

\begin{restatable}{theorem}{stdiameterwherestaresets}
\label{thm:st-diameter-where-st-are-sets}
	Let $G=(V,E)$ be an undirected positively edge-weighted graph with $n$ vertices and $m$ edges.
	Let $S,T \subseteq V$ be two non-empty sets. 
	Assume access to an \mbox{$f$-DSO} for $G$ with stretch $\sigma$, 
	space $\mathtt{S}$, query time $\mathtt{Q}$, and preprocessing time $\mathtt{P}$.
	\begin{enumerate}[(i)]
		\item There exists an $f$-FDO-$ST$ for $G$ with stretch $1 + 3\sigma$,
			space $\mathtt{S} + O(n^2)$, query time $O(f^2(\log(f)+\mathtt{Q}))$,
			and preprocessing time \mbox{$\mathtt{P}+ \Otilde(mn + n|S||T|)$}.
		\item If $f \le \log_2(n)$, the space can be improved to
			$\mathtt{S} + O(2^{f/2}\nwspace n^{3/2} \sqrt{\log n})$
			at the\\[.25em] expense of an $O(f^2 \nwspace (2^f +\mathtt{Q}))$ query time.
			The stretch and preprocessing time remain the same.
	\end{enumerate}
	If additionally $M\,{=}\, \max(|S|,|T|) \,{\le}\, 2^{f-1}$ holds, the space of the second variant further decreases to
	$\mathtt{S} + O(M^{1/2} \nwspace n^{3/2} \sqrt{\log n})$,    
    with a query time of $O(f^2 \nwspace (M +\mathtt{Q}))$.
\end{restatable}

The result offers two constructions
that mainly differ in the space overhead (over the size $\mathtt{S}$ of the underlying $f$-DSO) and the query time.
Some remark on the preprocessing time may be in order.
The reduction itself takes time $\mathtt{P}+ O(mn + n^2 \log n + n|S||T|)$ to compute,
but for technical reasons requires that the shortest paths in $G$ are unique.\footnote{
	This means, for any two vertices $s,t \in V$ in the same connected component, 
	we distinguish one $s$-$t$-path of minimum weight $d_G(s,t)$.
}
It is always possible to ensure this by
either randomly perturbing the edge weights with sufficiently small values, 
see~\cite{Mulmuley87IsolationLemma},
or by using a more complex deterministic method, known as \emph{lexicographic perturbation}~\cite{CaChEr13,Charnes52OptDegLP,HartvigsenMardon94AllPairsMinCut}.
The first method increases the preprocessing only by an additive $O(m)$ term,
but makes the preprocessing algorithm randomized.
Lexicographic perturbation, in turn, 
increases the time by an additive $\Otilde(mn)$~\cite{CaChEr13}.
In both cases, this conditioning step increases the preprocessing time in 
\Cref{thm:st-diameter-where-st-are-sets}
by at most a logarithmic factor.
(A more detailed description can be found at the beginning of \Cref{sec:f-dso-st}.)

We obtain several new $ST$-diameter oracles at the same time
by applying the reduction from \Cref{thm:st-diameter-where-st-are-sets}
to existing all-pairs $f$-DSOs.
We give an overview in \Cref{table:FDO-ST_from_all-pairs_DSO}.
The preprocessing times are omitted for brevity.
(The times for the underlying $f$-DSOs are reported in \Cref{table:all-pairs-DSOs}.)
We discuss the literature on distance sensitivity oracles extensively in \Cref{subsec:relatedWork_all-pairs}.

The previous result requires access to a distance sensitivity oracle
that can be queried with any pair of vertices.
In contrast, single-source $f$-DSO are usually easier to compute.
We show next how to use them in building $ST$-diameter oracles.
In general, all parameters of the reduction are improved compared to
\Cref{thm:st-diameter-where-st-are-sets}, except for the stretch.
The latter increases as a result of the lack of accuracy when the source of the underlying $f$-DSO is fixed.
The literature on single-source $f$-DSOs is discussed in \Cref{subsec:relatedWork_single-source}
and \Cref{table:ST-diam-from-single-source-DSO} lists the oracles obtained from previous works
via \Cref{thm:ST-diam-from-single-source-DSO}. 

\begin{table}[t]
\setstretch{1.2}
\centering
\caption{Properties of the fault-tolerant $ST$-diameter oracles for undirected graphs obtained via \Cref{thm:ST-diam-from-single-source-DSO} using single-source distance sensitivity oracles from the literature.
The parameters are the same as in \Cref{table:FDO-ST_from_all-pairs_DSO}.}
\label{table:ST-diam-from-single-source-DSO}
\begin{adjustbox}{max width=0.8\textwidth}
\begin{tabular}{cccccc}
\noalign{\hrule height 1pt}

\bf Sensitivity & \bf Stretch &\bf Space & \bf Query time  
& \bf References \\
\noalign{\hrule height 1pt}\\[-10pt]

$1$ & $7$  &  $\Otilde(n^{3/2})$ & $\Otilde(1)$ 
& \cite{BCFS21SingleSourceDSO_ESA,GuptaSingh18FaultTolerantExactDistanceOracle}\\[.5em]

$1$ & $7$  &  $\Otilde(M^{1/2} \nwspace n^{3/2})$ & $\Otilde(1)$ 
& \cite{BCFS21SingleSourceDSO_ESA}\\[.5em]

$1$ & $7+\varepsilon$  &  $\Otilde(n \nwspace \log(W)/\varepsilon)$ & $O(\log\log_{1+\varepsilon}(Wn))$ 
& \cite{BaswanaCHR20,BaswanaK13,Bilo16-esa}\\[.5em]

1 & 12 & $O(n)$ & $O(1)$ 
& \cite{Bilo16-esa}\vspace*{.5em}\\
\noalign{\hrule height 0.25pt}\\[-10pt]

$f \ge 1$  & $10f +7$  &  $\Otilde(fn)$ & $\Otilde(f^3)$  
& \cite{BiloG0P22Algorithmica}
\end{tabular}
\end{adjustbox}
\end{table}

\begin{restatable}{theorem}{stdiamfromsinglesourcedso}
\label{thm:ST-diam-from-single-source-DSO}
	Let $G=(V,E)$ be an undirected positively edge-weighted graph with $n$ vertices and $m$ edges.
	Let $S,T \subseteq V$ be two non-empty sets. 
	Assume access to a single-source $f$-DSO for $G$ with  stretch $\sigma$,
	space $\mathtt{S}$, query time $\mathtt{Q}$, and preprocessing time $\mathtt{P}$.
	There exists an $f$-FDO-$ST$ for $G$ with stretch $2+5\sigma$,
	space $O(\mathtt{S}+n)$, query time $O(f \,(\log(f)+\mathtt{Q}))$, and preprocessing time $\Otilde(\mathtt{P}+ m)$.
\end{restatable}

We now turn to the special case of the $ST$-diameter
with a single source or target, that is, for $|S| = 1$ or $|T| = 1$, respectively.
For the sake of clarity, 
whenever $S=\{s\}$ is a singleton, we use the term ``$sT$-diameter'' instead of ``$ST$-diameter'' 
or ``$\{s\}T$-diameter''; similarly for $T \,{=}\, \{t\}$.
In this setting, the underlying single-source $f$-DSOs really shine in that they allow to bring the stretch
back down again, along with all other parameters of the reduction.
\Cref{thm:sT-diam} is phrased for the $sT$-diameter but holds verbatim also for the $St$-diameter case.
\Cref{table:sT-diam} shows the resulting oracles. 

\begin{restatable}{theorem}{stdiam}
	\label{thm:sT-diam}
	Let $G=(V,E)$ be an undirected positively edge-weighted graph with $n$ vertices and $m$ edges.
	Let $s \in V$ be a vertex and $T \subseteq V$ a non-empty set.
	Assume access to a single-source $f$-DSO for $G$ with stretch $\sigma$,
	space $\mathtt{S}$, query time $\mathtt{Q}$, and preprocessing time $\mathtt{P}$.
	There exists an $f$-FDO-$sT$ for $G$ with stretch $1+2\sigma$,
	space $\mathtt{S}+O(n)$, query time $O(f  \,(\log(f) + \mathtt{Q}))$,
	and preprocessing time $\mathtt{P}+ \Otilde(m)$.
\end{restatable}

\begin{table}[tt]
\setstretch{1.2}
\centering
\caption{Properties of the fault-tolerant $sT$-diameter oracle (resp.\ $St$-diameter oracle) for undirected graphs obtained via \Cref{thm:sT-diam} using single-source distance sensitivity oracles from the literature.
The parameters are the same as in \Cref{table:FDO-ST_from_all-pairs_DSO}.}
\label{table:sT-diam}
\begin{adjustbox}{max width=0.8\textwidth}
\begin{tabular}{cccccc}
\noalign{\hrule height 1pt}

\bf Sensitivity & \bf Stretch &\bf Space & \bf Query time  
& \bf References \\
\noalign{\hrule height 1pt}\\[-10pt]

$1$ & $3$  &  $\Otilde(n^{3/2})$ & $\Otilde(1)$ 
& \cite{BCFS21SingleSourceDSO_ESA,GuptaSingh18FaultTolerantExactDistanceOracle}\\[.5em]

$1$ & $3$  &  $\Otilde(M^{1/2} \nwspace n^{3/2})$ & $\Otilde(1)$ 
& \cite{BCFS21SingleSourceDSO_ESA}\\[.5em]

$1$ & $3+\varepsilon$  &  $\Otilde(n \nwspace \log(W)/\varepsilon)$ & $O(\log\log_{1+\varepsilon}(Wn))$ 
& \cite{BaswanaCHR20,BaswanaK13,Bilo16-esa}\\[.5em]

1 & 5 & \zell{$O(n)$} & $O(1)$ 
& \cite{Bilo16-esa}\vspace*{.5em}\\
\noalign{\hrule height 0.25pt}\\[-10pt]

$f \ge 1$  & $4f +3$  &  $\Otilde(fn)$ & $\Otilde(f^3)$  
& \cite{BiloG0P22Algorithmica}
\end{tabular}
\end{adjustbox}
\end{table}

The other special case we discuss is that of the regular diameter, that is, $S = T = V$.
We provide two theorems showing how both all-pairs and single-source $f$-DSOs can be used to construct $f$-FDOs.
Note that both results hold for directed graphs, provided that the underlying distance sensitivity oracle
supports those.
The corresponding oracles can be found in \Cref{table:diameter_oracle,table:two_single_source_fdo}.

\begin{restatable}{theorem}{diameteroracle}
\label{thm:diameter_oracle}
Let $G$ be an (undirected or directed) positively edge-weighted graph with $n$ vertices and $m$ edges. Assume access to an $f$-DSO for $G$ with stretch $\sigma$, space $\mathtt{S}$, query time $\mathtt{Q}$, and  preprocessing time $\mathtt{P}$.
There exists an $f$-FDO for $G$ with stretch $1 + \sigma$, space $\mathtt{S}+O(1)$, query time $O(f^2 \nwspace \mathtt{Q})$,
and preprocessing time $\mathtt{P} + \Otilde(mn)$.
\end{restatable}

\begin{restatable}{theorem}{twosinglesourcefdo}
\label{thm:two_single_source_fdo}
Let $G$ be an (undirected or directed) positively edge-weighted graph with $n$ vertices and $m$ edges.
Assume access to a single-source $f$-DSO for $G$ with stretch $\sigma$, space $\mathtt{S}$, query time $\mathtt{Q}$, 
and preprocessing time $\mathtt{P}$.
There exists an $f$-FDO for $G$ with stretch $2 + 2\sigma$, space $O(\mathtt{S})$, query time $O(f \nwspace \mathtt{Q})$,
and preprocessing time $\Otilde(\mathtt{P} + m)$.
\end{restatable}

Finally, we also prove an information-theoretic lower bound on the space requirement
of approximate diameter oracles that support $f \ge 2$ edge failures.
Note that the bound is given in bits, while usually we measure the space in the number of $O(\log n)$-bit machine words.

\begin{table}[t]
\setstretch{1.2}
\centering
\caption{Properties of the fault-tolerant diameter oracles obtained via \Cref{thm:diameter_oracle} using all-pairs distance sensitivity oracles from the literature.
The applicable graph class (un-/directed, un-/weighted) is determined by the $f$-DSO.
The parameters are the same as in \Cref{table:FDO-ST_from_all-pairs_DSO}.}
\label{table:diameter_oracle}
\begin{adjustbox}{max width=\textwidth}
\begin{tabular}{cccccc}
\noalign{\hrule height 1pt}

\bf Sensitivity & \bf \zell{Stretch} &\bf Space & \bf Query time & \bf Ref. \\
\noalign{\hrule height 1pt}\\[-10pt]

1 & 2 & $\Otilde(n^2)$ & $O(1)$ & \cite{BeKa08,BeKa09}\\[.5em]

1 & 2& $\Otilde(n^2)$ & $O(1)$ & \cite{KarczmarzSankowski23SensitivityandDynamicDOs}\\[.5em]
 
1 & $(2k{-}1)(1{+}\eps)+1$ & $\Otilde (k^5 n^{1+1/k}/\eps^{4})$ & $O(k)$ & \cite{BaswanaK13}\vspace*{.5em}\\
\noalign{\hrule height 0.25pt}\\[-10pt]

2 & 2& $\Otilde(n^2)$ & $\Otilde(1)$ & \cite{DuanP09a}\vspace*{.5em}\\
\noalign{\hrule height 0.25pt}\\[-10pt]

$f = O(1)$ & $4 + \varepsilon$ & $\Otilde(n^{2-\frac{\alpha}{2(f+1)}} (\log(n)/\varepsilon)^{O(f)})$ & $O(n^{\frac{\alpha}{2}}/\varepsilon^2)$  & \cite{BCCCFKS24TheoretiCS} \vspace*{.5em}\\
\noalign{\hrule height 0.25pt}\\[-10pt]
  
$f = o(\frac{\log n}{\log \log n})$ & 
2 & 
$\Otilde(n^{3-\alpha})$ & 
$\Otilde(f^2n^{2-(1-\alpha)/f})$ & \cite{WY13}\\[.5em]
  
$f = o(\frac{\log n}{\log \log n})$ & 
$2+\varepsilon$ & 
$O(f n^{2+o(1)} \log(W)/\varepsilon^{f})$ & 
$\Otilde(f^7 \log \log W)$ & \cite{ChCoFiKa17}\vspace*{.5em}\\
\noalign{\hrule height 0.25pt}\\[-10pt]

$f \ge 1$ & 
2 & 
$O(fn^4)$ & 
$f^{O(f)}$ & \cite{DuRe22}\\[.5em]

$f  \ge 1 $ & 
2 & 
$O(n^{2+\alpha} M)$ & 
$\Otilde(f^4 Mn^{2-\alpha} + f^{\omega+2} Mn)$ & \cite{BrSa19}\\[.5em]

$f \ge 1$ & $(8k{-}2)(f{+}1)+1$ & $O\left(fkn^{1+1/k}\log{(Wn)}\right)$ & $\Otilde(f^3)$ & \cite{CLPR10}
\end{tabular}
\end{adjustbox}
\end{table}

\begin{table}[t]
\setstretch{1.2}
\centering
\caption{Properties of the fault-tolerant diameter oracles obtained via \Cref{thm:two_single_source_fdo} using single-source distance sensitivity oracles from the literature.
The applicable graph class (un-/directed, un-/weighted) is determined by the single-source $f$-DSO.
The parameters are the same as in \Cref{table:FDO-ST_from_all-pairs_DSO}.}
\label{table:two_single_source_fdo}
\begin{adjustbox}{max width=0.8\textwidth}
\begin{tabular}{cccccc}
\noalign{\hrule height 1pt}

\bf Sensitivity & \bf Stretch &\bf Space & \bf Query time  
& \bf References \\
\noalign{\hrule height 1pt}\\[-10pt]

1 & 4  &  $\Otilde(n^{3/2})$ & $\Otilde(1)$ 
& \cite{BCFS21SingleSourceDSO_ESA,GuptaSingh18FaultTolerantExactDistanceOracle}\\[.5em]

1 & 4  &  $\Otilde(M^{1/2} \nwspace n^{3/2})$ & $\Otilde(1)$ 
& \cite{BCFS21SingleSourceDSO_ESA}\\[.5em]

1 & $4+\varepsilon$  &  $\Otilde(n \log(W)/\varepsilon)$ & $O(\log\log_{1+\varepsilon}(Wn))$ 
& \cite{BaswanaCHR20,BaswanaK13,Bilo16-esa}\\[.5em]

1 & 6 & $O(n)$ & $O(1)$ 
& \cite{Bilo16-esa}\vspace*{.5em}\\
\noalign{\hrule height 0.25pt}\\[-10pt]

$f \ge 1$  & $4f +4$  &  $\Otilde(fn)$ & $\Otilde(f^3)$ 
& \cite{BiloG0P22Algorithmica}
\end{tabular}
\end{adjustbox}
\end{table}

\vspace*{-1em}
\begin{restatable}{theorem}{lowerbound}
\label{thm:lower-bound}
	Any $f$-FDO or $f$-FDO-ST for $n$-vertex graphs with sensitivity $f \ge 2$
	and a stretch of $\frac{5}{3}-\varepsilon$ for any $\varepsilon = \varepsilon(n) > 0$ 
	requires $\Omega(n^{3/2})$ bits of space.
\end{restatable}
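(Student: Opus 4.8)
The plan is an information-theoretic (encoding) argument, so the resulting bound holds regardless of the query time. I would construct a family $\{G_{\mathcal H}\}$ of $n$-vertex graphs indexed by a combinatorial object $\mathcal H$ that ranges over a set of cardinality $2^{\Omega(n^{3/2})}$, together with a fixed collection of queries, each using at most two edge failures, such that the answers of any $f$-FDO of stretch $\frac{5}{3}-\varepsilon$ on these queries already determine $\mathcal H$. An $f$-FDO with sensitivity $f\ge 2$ must in particular answer all queries with $|F|\le 2$, and an $f$-FDO-$ST$ with $S=T=V$ is by definition an $f$-FDO; hence the memory image of such an oracle is injective on $\{G_{\mathcal H}\}$, which forces it to use at least $\log_2 2^{\Omega(n^{3/2})}=\Omega(n^{3/2})$ bits, and the statement for every $f\ge 2$ and for both kinds of oracles follows at once.

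For the family I would let $\mathcal H$ be a bipartite graph between two sets $A$ and $B$ with $|A|=\Theta(\sqrt n)$ and $|B|=\Theta(n)$, so that the number of choices is $2^{|A|\cdot|B|}=2^{\Theta(n^{3/2})}$. On top of $A$, $B$, and auxiliary ``hub'' and ``port'' vertices (with constants chosen so the graph has exactly $n$ vertices), $G_{\mathcal H}$ would carry two kinds of edges: a fixed, highly connected core of hub edges that keeps \emph{every} pair of vertices at distance at most $3$, robustly under the deletion of the two edges used in any of our queries; and, for each $a\in A$ and $b\in B$, an ``$\mathcal H$-edge'' routed through the ports, present exactly when $(a,b)\in\mathcal H$, designed so that in $G_{\mathcal H}$ the edge only yields length-$3$ shortcuts (hence $\mathcal H$ cannot be read off without failures — which is unavoidable anyway, since the diameter is a maximum and cannot localize a single pair), whereas once the two designated query edges are removed, $a$ can reach $b$ either along this shortcut — length $3$ — or along a detour through the remaining core of length exactly $5$, and along nothing in between. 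To read off the bit of $\mathcal H$ at position $(a,b)$, I would query the oracle with the two designated edges $F$ whose removal exposes this dichotomy: by construction $\diam(G_{\mathcal H}-F)=3$ if $(a,b)\in\mathcal H$ and $\diam(G_{\mathcal H}-F)=5$ otherwise, while every pair unrelated to $(a,b)$ stays within distance $3$. Since $(\frac{5}{3}-\varepsilon)\cdot 3=5-3\varepsilon<5\le \diam(G_{\mathcal H}-F)$ in the second case, the oracle's estimate $\widehat D$ is strictly below $5$ precisely in the first case; reading all positions recovers $\mathcal H$, and the counting argument above finishes the proof.

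The technical heart — and the step I expect to be the real obstacle — is designing the core and the $\mathcal H$-gadget so that all of these requirements hold simultaneously. The core must be rich enough that \emph{every} distance, in particular every distance not touching the tested pair, stays at most $3$ even after the two deletions (otherwise the diameter of the faulty graph would not be governed by the single tested pair, and the query would reveal only a coarse aggregate, such as whether a whole neighborhood is complete); yet it must be thin enough around the tested region that two failures genuinely force the length-$3$-versus-length-$5$ alternative; and the $\mathcal H$-shortcuts must create no length-$1$ or length-$3$ connection in the unfaulty graph and no length-$4$ detour in the faulty one, so that the gap really is $3$ versus $5$ and not, say, $2$ versus $3$ or $4$ versus $5$ — all while packing $\Theta(n^{3/2})$ mutually independent bits into a graph on only $\Theta(n)$ vertices, i.e.\ while sharing the hub and port machinery across the $\Theta(n^{3/2})$ pairs $(a,b)$ rather than giving each its own private vertices. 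Once such a gadget is in place, verifying the two diameter values, handing the $f$-FDO-$ST$ case back to the $f$-FDO case via $S=T=V$, and lifting from $f=2$ to all $f\ge 2$ are routine.
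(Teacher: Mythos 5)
Your high-level framework is exactly the paper's: an information-theoretic encoding argument in which a graph on $\Theta(n)$ vertices carries $\Theta(n^{3/2})$ independent bits, and for each bit there is a set of two failures under which the diameter is $3$ if the bit is set and $\ge 5$ otherwise; the inequality $(\tfrac{5}{3}-\varepsilon)\cdot 3 < 5$ then makes any $(\tfrac{5}{3}-\varepsilon)$-stretch oracle resolve the dichotomy, and the usual counting gives $\Omega(n^{3/2})$ bits of storage. You also correctly dispatch the $f$-FDO-$ST$ case (take $S=T=V$, or in the paper's actual construction take $S$ and $T$ to be two of the vertex classes) and $f>2$.

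However, there is a genuine gap, and you already name it yourself: the entire proof lives in the graph construction, and you do not supply one. A "fixed, highly connected core that keeps every pair at distance $\le 3$ even after the two deletions" \emph{and simultaneously} leaves a single pair straddling a $3$-vs-$5$ gap is precisely the part that cannot be taken for granted; robustness of the core under deletions and fragility of the tested pair pull in opposite directions, and the bits must be packed densely enough to reach $n^{3/2}$ while the candidate detours must skip length $4$ entirely. The paper resolves this with a concrete four-layer graph $A\to B\to C\to D$ of sizes $N,2N,2N,N$, where $B$- and $C$-vertices carry a triple of indices $[i,j,k]$ with $k\in\{0,1\}$; edges within $B$ (and within $C$) are placed when \emph{exactly one} of the first two indices matches, edges between adjacent layers when both coordinates agree plus a $k=0$ constraint, and the hidden bits are the entries of a $\sqrt{N}\times\sqrt{N}\times\sqrt{N}$ binary tensor $M$, realized as optional $k=1$ edges from $A$ to $B$ and from $C$ to $D$. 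The two failing edges $e_1=\{a[i,j],b[i,y,0]\}$ and $e_2=\{c[i,y,0],d[x,y]\}$ then sever the unique canonical length-$3$ route from $a[i,j]$ to $d[x,y]$, so the diameter of $G-F$ stays $3$ iff the two $M$-edges $e_1'=\{a[i,j],b[i,y,1]\}$ and $e_2'=\{c[i,y,1],d[x,y]\}$ are present, and otherwise jumps to $\ge 5$ (with the specialization $x=j$ making both tested $M$-entries coincide). None of this is routine; it is the theorem. Your proposal identifies the right target, but without a gadget whose $3$-vs-$5$ behavior one can actually verify, there is no proof.

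Two smaller remarks. Your bipartite choice $|A|=\Theta(\sqrt n)$, $|B|=\Theta(n)$ does give the right bit count, but the paper does not use a bipartite index structure; it uses a three-index tensor with diagonal constraints ($i\neq j$, $j\neq y$), and this asymmetry matters for making the index-matching edges in $B\times B$ and $C\times C$ behave. Also, your parenthetical claim that "every pair unrelated to $(a,b)$ stays within distance $3$" is in fact what makes the reduction single-bit rather than an aggregate statistic, and the paper proves exactly that (Lemma~\ref{lemma:diam-bound-2}), but it is a case analysis over all layer pairs and cannot be assumed.
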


\vspace*{-1em}
\subparagraph*{Outline.}
The remainder of this work is structured as follows.
The next three subsections review the literature on diameter oracles as well as all-pairs and single-source distance sensitivity oracles. 
We fix our notation in \Cref{sec:prelimiaries}. 
\Cref{sec:f-dso-st} presents our constructions of $f$-FDO-$ST$
for general sets $S, T \subseteq V$. 
In \Cref{sec:single-source-f-FDO,sec:FDOs}, we consider the respective 
special cases of a single source
and of the unrestricted diameter.
\Cref{sec:lower-bound} proves the space lower bound.

\subsection{Related Work on Fault-Tolerant Diameter Oracles}
\label{subsec:relatedWork-fdos}

Fault-tolerant diameter oracles were introduced by Henzinger, Lincoln, Neumann, and Vassilevska Williams~\cite{HenzingerL0W17}.
They showed that for a single failure in unweighted directed graphs, 
one can compute in time $\Otilde(mn + n^{1.5}\sqrt{Dm/\varepsilon})$,
where $\varepsilon\in (0,1]$ and $D$ is the diameter of the graph,
a $1$-FDO with $1+\varepsilon$ stretch that has $O(m)$ space, constant query time.
Bilò, Cohen, Friedrich, and Schirneck~\cite{BCFS21DiameterOracle_MFCS} improved the preprocessing time to $\Otilde(mn + n^2/\varepsilon)$, which is near-optimal assuming the combinatorial Boolean matrix multiplication conjecture 
(for details, see~\cite{HenzingerL0W17}).
Using fast matrix multiplication instead, their preprocessing time for dense graphs decreases 
to $\Otilde(n^{2.5794} + n^2/\varepsilon)$.

Bilò, Choudhary, Cohen, Friedrich, and Schirneck~\cite{Bilo22Extremal} addressed the problem of constructing $1$-FDOs with $o(m)$ space. 
They showed that for unweighted directed graphs with sufficiently large diameter $D=\omega(n^{5/6})$,
there is a \mbox{$1$-FDO} taking $\Otilde(n)$ space, with $1+\frac{n^{5/6}}{D} = 1 + o(1)$ stretch, and $O(1)$ query time.
It has a preprocessing time of $O(mn)$. 
In the same work, it was also shown that for any $\varepsilon > 0$ and graphs with diameter $D=\omega((n^{4/3} \log n)/(\varepsilon\sqrt m ))$, there is a $(1+\varepsilon)$-stretch $1$-FDO, 
with preprocessing time $O(mn)$, space $o(m)$, and constant query time.

For \emph{undirected} graphs the space requirement can be reduced.
There is a folklore construction that combines the $1$-DSO by Bernstein and Karger~\cite{BeKa09}
with the observation that in undirected graphs the eccentricity of an arbitrary vertex is a $2$-approximation of the diameter.
This results in an 1-FDO with stretch $2$ and constant query time
that takes only $O(n)$ space, see \cite{BCFS21DiameterOracle_MFCS,HenzingerL0W17}.

For $f \,{>}\, 1$ edge failures in undirected graphs with non-negative edge weights, Bilò~et~al.~\cite{BCFS21DiameterOracle_MFCS} also presented an $f$-FDO with stretch $f\,{+}\,2$, an $O(f^2 \log^2 n)$ query time, $\Otilde(f n)$ space, and $\Otilde(fm)$ preprocessing time.

We are not aware of any $O(n)$-sized, constant-stretch FDOs for \emph{directed} graphs with arbitrary diameter
in the literature prior to this work, not even for sensitivity $f=1$.
Also, no non-trivial $f$-FDOs with $o(f)$-stretch were known. 
To the best of our knowledge, we are the first to study the problem of general $f$-FDO-$ST$s with $S,T \neq V$.

We now discuss the known space lower bounds for FDOs.
Bilò, Cohen, Friedrich, and Schirneck~\cite{BCFS21DiameterOracle_MFCS} proved that $f$-FDOs with finite stretch must take $\Omega(fn)$ bits of space,
which nearly matches their construction (see above).
They also gave an $\Omega(m)$-bit bound for 1-FDOs with stretch $\sigma \,{<}\, 3/2$ for undirected \emph{unweighted} graphs,
or \emph{edge-weighted} graphs when $\sigma \,{<}\, 2$.
In a follow-up work, Bilò, Choudhary, Cohen, Friedrich, and Schirneck~\cite{Bilo22Extremal} generalized this to directed graphs. In particular, they showed that for directed unweighted graphs with diameter $D=O(n/\sqrt{m})$, any 1-FDO with stretch better than $\big(\frac{3}{2}-\frac{1}{D}\big)$ requires $\Omega(m)$ bits of space. They further proved that $f$-FDOs for digraphs require $\Omega(2^{f/2}n)$ bits of space, as long as $2^{f/2}=O(n)$.

\subsection{All-Pairs Distance Sensitivity Oracles}
\label{subsec:relatedWork_all-pairs}

\begin{table}[t]
\setstretch{1.2}
\centering
\caption{Existing all-pairs distance sensitive oracles.
The reported features are for undirected graphs, including for those $f$-DSOs that also support directed graphs.
The maximum edge weight for graphs with arbitrary positive weights is denoted $W$,
for integer-weighted graphs it is $M$.
The parameter $k \ge 1$ is a positive integer, $\varepsilon > 0$ a positive real, 
and $\alpha \in [0,1]$ is a real number in the unit interval.
We use $c > 1$ for a constant, and $\omega < 2.371339$ for the matrix multiplication exponent.}
\label{table:all-pairs-DSOs}
\begin{adjustbox}{max width=\textwidth}
\begin{tabular}{cccccc}
\noalign{\hrule height 1pt}

\bf Sensitivity & \bf \zell{Stretch} &\bf Space & \bf Query time & \bf\zell{Preprocessing \\ Time } & \bf Ref. \\
\noalign{\hrule height 1pt}\\[-10pt]

1 & 1 & $\Otilde(n^2)$ & $O(1)$ & $\Otilde(m n)$ & \cite{BeKa08,BeKa09}\\[.5em]

1 & 1 & $\Otilde(n^2)$ & $O(1)$ & $O(n^{2.529})$ & \cite{KarczmarzSankowski23SensitivityandDynamicDOs}\\[.5em]
 
1 & $(2k{-}1)(1{+}\eps)$ & $\Otilde (k^5 n^{1+1/k}/\eps^{4})$ & $O(k)$ & $O(kmn^{1+1/k})$ & \cite{BaswanaK13}\vspace*{.5em}\\
\noalign{\hrule height 0.25pt}\\[-10pt]

2 & 1& $\Otilde(n^2)$ & $\Otilde(1)$ & $\poly(n)$ & \cite{DuanP09a}\vspace*{.5em}\\
\noalign{\hrule height 0.25pt}\\[-10pt]

$f=O(1)$ & $3+\varepsilon$ & $\Otilde(n^{2-\frac{\alpha}{2(f+1)}} (\log(n)/\varepsilon)^{O(f)})$ & $O(n^{\frac{\alpha}{2}}/\varepsilon^2)$ &
$\Otilde(mn^{2-\frac{\alpha}{2(f+1)}} (\log(n)/\varepsilon)^{O(f)})$ & \cite{BCCCFKS24TheoretiCS}\vspace*{.5em}\\
\noalign{\hrule height 0.25pt}\\[-10pt]
  
$f = o(\frac{\log n}{\log \log n})$ & 
1 & 
$\Otilde(n^{3-\alpha})$ & 
$\Otilde(n^{2-(1-\alpha)/f})$ &
$O(Mn^{\omega+1-\alpha})$  & \cite{WY13}\\[.5em]  
  
$f = o(\frac{\log n}{\log \log n})$ & 
$1+\varepsilon$ & 
$O(f n^{2+o(1)} \log(W)/\varepsilon^{f})$ & 
$\Otilde(f^5 \log \log W)$ &
$O(fn^{5+o(1)} \log(W)/\varepsilon^{f})$  & \cite{ChCoFiKa17}\vspace*{.5em}\\
\noalign{\hrule height 0.25pt}\\[-10pt]

$f \ge 1$ & 
1 & 
$O(fn^4)$ & 
$f^{O(f)}$ &
$\Omega(n^f)$ &
\cite{DuRe22}\\[.5em]

$f \ge 1$ &  
1 & 
$O(f^4n^2\log(Mn))$ &  
$O((c f \log(Mn))^{O(f^2)})$ &  
$\Omega(n^f)$ &  
\cite{DeyGupta24}\\[.5em]


$f  \ge 1 $ & 
1 & 
$O(Mn^{2+\alpha})$ & 
$\Otilde(f^2 Mn^{2-\alpha} + f^{\omega} Mn)$ &
$\Otilde(Mn^{\omega + (3-\omega)\alpha})$ &
\cite{BrSa19}\\[.5em]

$f \ge 1$ & $(8k{-}2)(f{+}1)$ & $O\left(fkn^{1+1/k}\log{(Wn)}\right)$ & $\Otilde(f)$ &
$\poly(n)$  & \cite{CLPR10}


\end{tabular}
\end{adjustbox}
\end{table}

We now discuss previous works on distance sensitivity oracles.
They are summarized in \Cref{table:all-pairs-DSOs}.
For simplicity, the features for undirected graphs are reported, even if the oracle also works on digraphs.
This serves as the basis for \Cref{table:FDO-ST_from_all-pairs_DSO,table:diameter_oracle}.

The first distance-sensitive oracle was given by Demetrescu and Thorup for directed graphs \cite{DT02}.
It maintains exact distances for a single edge failure ($\sigma = 1$ and $f = 1$).
The space requirement of the oracle is $O(n^2 \log n)$ and its query time is $O(\log n)$.
This was later generalized to also handle a single vertex failures by Demetrescu, Thorup, Chowdhury, and Ramachandran~\cite{DeThChRa08}.
They presented an exact $1$-DSO
of size $O(n^2 \log n)$, with $O(1)$ query time and $\Otilde(mn^2)$ preprocessing time.
In two consecutive papers, Bernstein and Karger improved the preprocessing time first
to $O(n^2 \sqrt m)$ in~\cite{BeKa08} and then to $\Otilde(mn)$ in~\cite{BeKa09} 
(keeping the space and query time unchanged).
Baswana and Khanna \cite{BaswanaK13}
considered approximate $1$-DSOs for unweighted graphs.
They devised a data structure of size $O(k^5 n^{1+1/k} \log^3 (n)/\eps^4)$, with stretch $(2k{-}1)(1{+}\eps)$, and a $O(k)$ query time
for any positive integer parameter $k$.
Duan and Pettie~\cite{DuanP09a} considered the case of two failures (vertices and edges) with exact distances.
The size of their oracle is $O(n^2 \log^3 n)$, the query time is $O(\log n)$. 

Relying on fast matrix multiplication, the $1$-DSO by Chechik and Cohen \cite{ChCo20}
has a subcubic $\widetilde{O}(Mn^{2.873})$ preprocessing time for integer edge weights in the range $[-M,M]$ 
(that is, including negative values).
Their oracle has an $\widetilde{O}(1)$ query time.
This was later improved by Ren~\cite{Ren22Improved} as well as by Gu and Ren~\cite{GuRen21},
who obtained an $\Otilde(Mn^{2.5794})$ preprocessing time and constant query time.
For unweighted directed graphs, Karczmarz and Sankowski~\cite{KarczmarzSankowski23SensitivityandDynamicDOs} showed that one can construct in time $O(n^{2+\rho}) = O(n^{2.529})$ a $1$-DSO with $O(1)$ query time and $O(n^2)$ space.
Here, $\rho \approx 0.529$ is the solution of the equation $\omega(1,\rho,1) = 1 + 2\rho$,
where $\omega(a,b,c)$ is the smallest exponent such that one can multiply $n^a \times n^b$ and $n^b \times n^c$ matrices in $O(n^{\omega(a,b,c)})$ time. The data structure can also handle single vertex failures. 
The preprocessing time matches the best time known for computing APSP \cite{Zwick02DirectedAPSP}.

Regarding $f$-DSOs for larger values of $f$, Duan and Ren~\cite{DuRe22} gave an exact data structure for undirected weighted graphs with $O(fn^4)$ space, $f^{O(f)}$ query time.
However, their construction has an $\Omega(n^f)$ preprocessing time.
Later, Dey and Gupta~\cite{DeyGupta24} provided an alternative solution for undirected graphs with positive integer weights in $[1,M]$ with the same preprocessing time, but a better space of $O(f^4 n^2 \log(Mn))$ and a query time $O(c^{(f+1)^2} f^{8(f+1)^2}  \log^{2(f+1)^2}(Mn))$ for some constant $c > 1$.
The preprocessing times in \cite{DuRe22,DeyGupta24} are only polynomial for constant values of $f$.

Weimann and Yuster \cite{WY13} presented
an distance sensitivity oracle that can handle up to $f = o(\log{n}/\log\log{n})$ edge or vertex failures with 
an $\Otilde(n^{2-(1-\alpha)/f})$ query time and $O(Mn^{\omega+1-\alpha})$ preprocessing time for directed
graphs with integer weights in the range $[-M,M]$.
Here, $\alpha \in [0,1]$ is a real number and $\omega< 2.371339$ the matrix multiplication exponent~\cite{Alman25MoreAssymetryFasterMatrixMultiplication,VWilliams24MatrixMultiplicationAlphaToOmega,Duan23FasterMMAsymmetricHashing}.
Grandoni and Vassilevska Williams~\cite{GrandoniVWilliamsFasterRPandDSO_journal} constructed a 1-DSO
with subcubic $\Otilde(Mn^{\omega+\frac{1}{2}}+ Mn^{\omega+\alpha(4-\omega)})$
preprocessing time and sublinear $\Otilde(n^{1-\alpha})$ query time.
Inspired by the related topic of dynamic graphs,
van den Brand and Saranurak~\cite{BrSa19} gave a distance sensitive oracle
that can handle $f \le \log n$ (batch) updates, where an update is an edge insertion or deletion.
It has $\Otilde(Mn^{\omega + (3-\omega)\alpha})$ preprocessing time,
$\Otilde(f^2 Mn^{2-\alpha} + f^\omega Mn)$ update time,
and an $\Otilde(f Mn^{2-\alpha} + f^2 Mn)$ query time. 
The work of Karczmarz and Sankowski~\cite{KarczmarzSankowski23SensitivityandDynamicDOs} also contains a data structure for 
unweighted directed graphs with $\Otilde(n^\omega)$ preprocessing time and $O(n^2)$ space,
such that for any set $F$ of $f$ edge or vertex failures, the data structure can be \emph{updated} in time $O(f^{\omega-1} n)$ 
to then support distance queries with failures $F$ in $O(fn)$ time.

%
We now discuss approximate $f$-DSOs.
For undirected graphs with real edge weights bounded by $W$, 
Chechik, Langberg, Peleg, and Roditty~\cite{CLPR10} presented an $f$-DSO with stretch $(8k{-}2)(f{+}1)$, 
where again $k\geq 1$ is some integer parameter.
Notably the stretch scales with the number $f$ of supported edge failures.
The oracle takes space $O(fkn^{1+1/k}\log(Wn))$ and has an $\Otilde(f \log\log (Wn))$ query time. 
Later, Chechik, Cohen, Fiat and Kaplan~\cite{ChCoFiKa17} gave a solution whose stretch is independent of the sensitivity.
Namely, for every approximation parameter $1 > \varepsilon > \sfrac{1}{n}$ (including non-constant ones),
they designed an $(1 {+} \varepsilon)$-approximate $f$-DSO with a query time of $O(f^5 \log(n) \log\log(W))$,
space $O\big(f n^2  \log(W) \cdot (c\log (n)/\varepsilon)^f \big)$,
and $O\big(f n^5 \log(W) \cdot (c\log (n)/\varepsilon)^f \big)$ preprocessing time.
In the common case of a sensitivity $f = o(\log{n}/\log\log{n})$, polynomial weights $W = \poly(n)$, and constant $1 > \varepsilon >0$, 
their $f$-DSO has $1+\varepsilon$ stretch, takes $n^{2+o(1)}$ space, and has $\Otilde(1)$ query time, and $n^{5+o(1)}$ preprocessing time.


Recently, Bilò, Chechik, Choudhary, Cohen, Friedrich, Krogmann, and Schirneck~\cite{BCCCFKS24TheoretiCS} designed the first $f$-DSO with subquadratic space, constant stretch, and truly sublinear query time for undirected unweighted graphs with unique shortest paths. More precisely, for constants $f\ge 2$ and $\sfrac{1}{2} > \alpha > 0$, and for every $\varepsilon > 0$,
their data structure has stretch $3{+}\eps$, a space of
$\Otilde\big(n^{2-\frac{\alpha}{f+1}} \,{\cdot}\, (c\log n/\varepsilon)^{f+2} \big)$, query time of
$O(n^{\alpha}/\varepsilon^2)$, and preprocessing time
$\Otilde\big(mn^{2-\frac{\alpha}{f+1}} \,{\cdot}\, (c\log n/\varepsilon)^{f+1} \big)$. 
In a follow-up work, Bilò, Chechik, Choudhary, Cohen, Friedrich, and Schirneck~\cite{BCCCFSFOCS24} improved the approximation ratio
to (essentially) $1+\varepsilon$ multiplicative by introducing an \emph{additive} stretch of $2$.
They kept the space subquadratic and the query sublinear
by designing an $f$-DSO that, for any positive integer $\ell$, real number $(\ell{+}1)/2 \ge \alpha > 0$,
sensitivity $f \,{=}\, o(\log(n)/\log\log n)$, and $\varepsilon = \omega(\sqrt{\log n}/n^{\frac{\alpha}{2(\ell+1)(f+1)}})$,
has stretch $((1{+}\frac{1}{\ell})(1{+}\varepsilon), \nwspace 2)$,
space $n^{2- \frac{\alpha}{(\ell+1)(f+1)} + o(1)}/\varepsilon^{f+2}$, query time $O(n^\alpha/\varepsilon^2)$, 
and preprocessing time $n^{2+\alpha + o(1)} + mn^{2- \frac{\alpha}{(\ell+1)(f+1)} + o(1)}/\varepsilon^{f+1}$.

\subsection{Single-Source Distance Sensitivity Oracles}
\label{subsec:relatedWork_single-source}

\begin{table}[t]
\setstretch{1.2}
\centering
\caption{Existing single-source distance sensitive oracles.
The reported features are for undirected graphs, including for those $f$-DSOs that also support directed graphs.
The parameters are the same as in \Cref{table:all-pairs-DSOs}.}
\label{table:single-source-DSOs}
\begin{adjustbox}{max width=\textwidth}
\begin{tabular}{cccccc}
\noalign{\hrule height 1pt}

\bf Sensitivity & \bf Stretch &\bf Space & \bf Query time  & \bf\zell{Preprocessing \\ Time } & \bf Ref. \\
\noalign{\hrule height 1pt}\\[-10pt]

1 & 1  &  $\Otilde(n^{3/2})$ & $\Otilde(1)$ & $\Otilde(mn^{1/2}+n^2)$ & \cite{BCFS21SingleSourceDSO_ESA,GuptaSingh18FaultTolerantExactDistanceOracle}\\[.5em]

1 & 1  &  $\Otilde(M^{1/2} \nwspace n^{3/2})$ & $\Otilde(1)$ & $\Otilde(Mn^\omega)$ & \cite{BCFS21SingleSourceDSO_ESA}\\[.5em]

1 & $1+\varepsilon$  &  $\Otilde(n\log(W)/\varepsilon)$ & $O(\log\log_{1+\varepsilon}(Wn))$ & $\poly(n)$ & \cite{BaswanaCHR20,BaswanaK13,Bilo16-esa}\\[.5em]

1 & 2 & $O(n)$ & $O(1)$& $\Otilde(mn)$ &\cite{Bilo16-esa}\vspace*{.25em}\\
\noalign{\hrule height 0.25pt}\\[-10pt]

$f \ge 1$  & $2f +1$  &  $\Otilde(fn)$ & $\Otilde(f^2)$  & $\Otilde(fm) $ & \cite{BiloG0P22Algorithmica}
\end{tabular}
\end{adjustbox}
\end{table}

There are much fewer works on single-source DSOs.
They are summarized in \Cref{table:single-source-DSOs},
which in turn underpins \Cref{table:ST-diam-from-single-source-DSO,table:sT-diam,table:two_single_source_fdo}.
We first discuss undirected graphs.
Baswana and Khanna~\cite{BaswanaK13} showed that unweighted graphs
can be preprocessed in $O(m\sqrt{n/\varepsilon})$ time 
to compute a $(1{+}\varepsilon)$-stretch single-source \mbox{1-DSO} for edge and vertex failures.
The oracle has size $O(n\log n + n/\varepsilon^3)$ and constant query time.
For weighted graphs, they showed how an $O(n\log n)$ size oracle  
can report $3$-approximate distances for a single failure in $O(1)$ time. 
Bil{\`{o}}, Gualà, Leucci, and Proietti~\cite{Bilo16-esa} proved,
for a single edge failure in weighted graphs, one can compute an $O(n)$-size
oracle with stretch $2$, maintaining a constant query time. 
Also, a construction was provided that has $1+\varepsilon$ stretch,
with  $O(n\log(1/\varepsilon)/\varepsilon)$ space
and $O(\log(n)\log(1/\varepsilon)/\varepsilon)$ query time.

All the results stated until now were for a single edge or vertex failure.
In a more recent work, Bil{\`{o}}, Gual{\`{a}}, Leucci, and Proietti~\cite{BiloG0P22Algorithmica} gave a construction for multiple failures
that takes space $O(fn \log^2 n)$, and is computable in $\Otilde(fm)$ time.
The oracle reports $(2f{+}1)$-stretched distances in time $O(f^2 \log^2 n)$.

Bilò, Cohen, Friedrich, and Schirneck~\cite{BCFS21SingleSourceDSO_ESA} presented several additional single-source DSOs. For undirected unweighted graphs, they presented an oracle with $O(n^{3/2})$ space, query time $\Otilde(1)$,
and a $\Otilde(m\sqrt{n}+n^2)$ preprocessing time. 
For the case that the graph has integer edge weights in the range $[1,M]$ and one is willing to use fast matrix multiplication, 
they presented an alternative construction with $O(M^{1/2}n^{3/2})$ space, query time $\Otilde(1)$, 
and a $\Otilde(Mn^\omega)$ preprocessing time. 
Finally, for sufficiently sparse graphs with $m=O(M^{3/7}n^{7/4})$ edges they proved that a subquadratic-in-$n$ preprocessing time is possible.
Namely, they devised a single-source DSO with the same $O(M^{1/2}n^{3/2})$ size and $\Otilde(1)$ query time, 
but a preprocessing time of $\Otilde(M^{7/8}m^{1/2}n^{11/8})$.

For directed graphs, Baswana, Choudhary, Hussain, and Roditty~\cite{BaswanaCHR20} showed that one can preprocess graphs with edge weights in the range $[1,W]$ into an oracle with space $\Otilde (n \log(W)/\varepsilon)$
that reports $ (1{+}\varepsilon)$-approximate distances for a single edge/vertex failure in $\Otilde(\log\log_{1+\varepsilon}(Wn))$ time.
Gupta and Singh~\cite{GuptaSingh18FaultTolerantExactDistanceOracle} designed an exact 1-DSO for directed unweighted graphs.
The data structure has size $\Otilde(n^{3/2})$ and $\Otilde(1)$ query time.

\section{Preliminaries}
\label{sec:prelimiaries}

For a given graph $G=(V,E)$, possibly with positive edge weights, 
we denote by $d_G(u,v)$ the distance in $G$ from vertex $u \in V$ to vertex $v \in V$. 
Given two non-empty subsets $S,T \subseteq V$, the $ST$-\emph{diameter} of $G$ is defined as $\diam(G,S,T)=\max_{s \in S, t \in T} d_G(s,t)$. 
With a slight abuse of notation, when $S=\{s\}$, we also use $\diam(G,s,T)$ as a shorthand of $\diam(G,\{s\},T)$ for the $sT$-diameter. 
Likewise, we let $\diam(G,S,t)$ stand for $\diam(G,S,\{t\})$ in case $T = \{t\}$ is a singleton.
If $S=T=V$, we use $\diam(G)$ instead of $\diam(G,V,V)$.
The value $\diam(G,s,V) = \max_{v \in V} d_G(s,v)$ is the \emph{eccentricity} of $s$ in $G$.

For a set $F \subseteq E$ of edges, we denote by $G-F$ the graph obtained from $G$ 
by removing all the edges in $F$. 
If $H$ is a subgraph of $G$, we use $V(H)$ and $E(H)$ 
for the vertices and edges of $H$, respectively.
An $f$-\emph{edge fault-tolerant distance sensitivity oracle} ($f$-DSO) 
is a data structure that receives queries $(u,v,F)$ 
with $u,v \in V$ and $F \subseteq E$ with $|F| \le f$. 
It returns an estimate $\widehat{d}_{G-F}(u,v)$ of the $u$-$v$-distance in $G-F$ 
such that $d_{G-F}(u,v) \leq \widehat{d}_{G-F}(u,v) \leq \sigma \cdot d_{G-F}(u,v)$,
where $\sigma \ge 1$ is the \emph{stretch}.
The true quantity $d_{G-F}(u,v)$ is sometimes called the \emph{replacement distance} from $u$ to $v$.

An $f$-\emph{edge fault-tolerant \mbox{$ST$-diameter} oracle} (\mbox{$f$-FDO-$ST$}) with stretch $\sigma$
returns, upon query $F \subseteq E$ with $|F| \le f$,
an estimate $\widehat{D}$ of the $ST$-diameter of $G-F$
such that $\diam(G{-}F,S,T) \leq \widehat{D} \leq \sigma \cdot \diam(G{-}F,S,T)$.
If $S = \{s\}$ or $T = \{t\}$ are singletons or $S=T=V$ are both the whole vertex set,
we abbreviate the respective oracles as $f$-FDO-$sT$, $f$-FDO-$St$, and $f$-FDO.
We sometimes use the term \emph{fault-tolerant diameter} for $\diam(G{-}F)$ if the set $F$
is clear from context.

\section{\textit{ST}-Diameter Oracles}
\label{sec:f-dso-st}

Arguably, the technically most interesting construction is the reduction
from fault-tolerant $ST$-diameter oracles to distance sensitivity oracles
(\Cref{thm:st-diameter-where-st-are-sets} restated below).
This and the proof of \Cref{thm:sT-diam} 
in \Cref{sec:single-source-f-FDO} contain many techniques
that are also used to establish the other results later in \Cref{sec:FDOs}.

\stdiameterwherestaresets*

In the following, we assume that the shortest paths in $G$ are made unique.
We can thus identify a shortest path with its endpoints.
This allows us to save some preprocessing time and 
also makes the resulting data structure more space-efficient.
In fact, it will turn out to be the key ingredient to keep the space overhead
over the underlying $f$-DSO  subquadratic (in $n$).
However, the precise way how to make the paths unique influences the nature of the preprocessing.
As hinted in \Cref{sec:intro},
one can ensure a unique shortest path in a random fashion by slightly perturbing the edge weights.
Alternatively, lexicographic perturbation~\cite{CaChEr13,Charnes52OptDegLP,HartvigsenMardon94AllPairsMinCut}
provides a deterministic procedure
but adds an $\Otilde(mn)$ term to the running time.

We elaborate on the lexicographic perturbation approach. 
Assume that each of the $m$ edges is assigned a distinct integer label from the range $[1,m]$. 
For a path $P$, define the string $S_P \in \{0,1\}^m$
as the characteristic vector of its edge set $E(P) \subseteq E$.
That means, we have $S_P(i) = 1$ iff the edge with label $i$ belongs to $P$. 
Given two different paths $P_1$ and $P_2$ of equal length between the same endpoints,
we break the tie in favor of $P_1$ if the string $S_{P_1}$ is lexicographically smaller than $S_{P_2}$, that is, if at the first index $i$ where $S_{P_1}(i) \neq S_{P_2}(i)$, 
we have $S_{P_1}(i) < S_{P_2}(i)$. 

This ordering can be implemented efficiently when running Dijkstra's algorithm
to compute a shortest-path tree from every vertex. 
Specifically, we maintain the current shortest-path tree in a top-tree data structure~\cite{tarjan2005self} augmented with a lowest common ancestor (LCA) structure, supporting path queries. 
Consider a $\text{relax}(u,v)$ operation where $d_G(s,u) + w(u,v) = d_G(s,v)$. 
In this case, we break ties by comparing the candidate path obtained by extending the path from $s$ to $u$ with the edge $(u,v)$, and the current path to $v$. 
Let $z = \mathrm{LCA}(u, v)$. 
Using the top-tree structure, we can lexicographically compare these two paths by identifying the maximum-labeled edge along each path segment beyond $z$, and selecting the path with the smaller such maximum-label, and update the top-trees and LCA data structures accordingly. 
A description of the lexicographic perturbation technique with full details
can be found for example in~\cite[Section 6.2]{CaChEr13}. 
In the following, whenever we refer to \emph{the} shortest path between vertices $s$ and $t$,
we mean the respective path in the tree rooted at the source $s$.

Let us first give a coarse outline of the remaining section.
We define two sets $S_F$ and $T_F$ depending on $S$, $T$ and the failure sets $F$.
We then show how they can be used to estimate the fault-tolerant $ST$-diameter.
The main part of the section consists of constructing an auxiliary data structure
that efficiently computes $S_F$ and $T_F$.
We then show how to reduce the space of that structure, 
albeit at the expense of a higher query time.

Fix a set $F \subseteq E$ of at most $f$ edges
and recall that we use $V(F)$ to denote the set of endpoints of edges in $F$.
We distinguish two subsets of $V(F)$, $S_F$ and $T_F$.
Intuitively, a vertex $v$ belongs to $S_F$ if it is on some shortest $ST$-path
and the portion of that path from the source to $v$ survives even after the failures;
likewise, $v$ is in $T_F$ if the portion from it to the target is not affected by failing edges.
For the exact definition, let $\pi_{x,y}$ denote the unique shortest path from vertex $x$ to $y$ in $G$.
Some $v \in V(F)$ is in $S_F$ if there exist $s \in S$ and $t \in T$ such that
$v$ is on the shortest path $\pi_{s,t}$ and the subpath from $s$ to $v$ contains no failing edge.
The definition of $T_F$ is analogous with $t$ in place of $s$.

We use two crucial properties of the sets $S_F$ and $T_F$ and their elements.
First, let $v \in S_F$ be such an element as witnessed by some $s \in S$ and $t \in T$.
The prefix of $\pi_{s,t}$ from $s$ to $v$ has length\footnote{
	Since the shortest paths are unique,
	the prefix of $\pi_{s,t}$ from $s$ to $v$ \emph{is} the shortest path $\pi_{s,v}$,
	but this stronger property is not needed here.
} $d_G(s,v)$ and,
since $v \in S_F$, this prefix does not contain any edge of $F$.
This gives $d_{G-F}(s,v) = d_G(s,v)$.
Similarly, we have $d_{G-F}(v,t) = d_G(v,t)$ for $v \in T_F$ and their respective targets $t$.
Secondly, $S_F$ and $T_F$ are subsets of $V(F)$ and \emph{not} of $S$ and $T$.
This will help to reduce the query times tremendously. 
The cardinalities $|S_F|, |T_F|$ are in $O(f)$ while $S$ and $T$ may be much larger.
The core of this section is to prove that the information in $S_F,T_F$ is enough
to estimate the fault-tolerant $ST$-diameter.

\subsection{Query Algorithm}
\label{subsec:ST-diam_query}

Before we describe how $S_F$ and $T_F$ are computed and stored, we present the query algorithm of our $ST$-diameter oracle.
Let $\D$ denote the $f$-DSO with stretch $\sigma$
that is assumed in \Cref{thm:st-diameter-where-st-are-sets}.
Let $F$ be the query set and suppose $S_F$ and $T_F$ to be given.
For every pair of vertices $(u,v) \in S_F \times T_F$,
the diameter oracle queries $\D$ with the triple $(u,v,F)$ to obtain a $\sigma$-approximation of $d_{G-F}(u,v)$.
The $f$-FDO-$ST$ returns the value
\begin{equation*}
	\widehat{D} = \diam(G,S,T)+\max_{(u,v) \in S_F\times T_F}\D(u,v,F).
\end{equation*}

\noindent
The value $\diam(G,S,T)$ can be precomputed.
So given $S_F$ and $T_F$, the time needed to obtain $\widehat{D}$ is $O(f^2\mathtt{Q})$,
where $\mathtt{Q}$ is the query time of the $f$-DSO $\D$.

We next prove the stretch of our oracle.
Using the sets $S_F$ and $T_F$ in place of $S$ and $T$ have the advantage that they make the query time independent of the sizes $|S|$ and $|T|$.
However, this bears the danger that the oracle may overestimate the \mbox{$ST$-diameter} of $G{-}F$.
After all, for arbitrary vertices $u \notin S$ and $v \notin T$, the distance $d_{G-F}(u,v)$  may not be bounded by any multiple of $\diam(G{-}F,S,T)$.
The crucial part in the proof of the next lemma is to show that in the specific case $u \in S_F$ and $v \in T_F$,
$d_{G-F}(u,v)$ is at most three times the fault-tolerant $ST$-diameter.

\begin{lemma}
\label{lemma:st-fdo}
	The $f$-FDO-$ST$ has stretch $1+3\sigma$.
\end{lemma}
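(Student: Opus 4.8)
The plan is to establish the two inequalities $\diam(G{-}F,S,T)\le\widehat{D}$ and $\widehat{D}\le(1+3\sigma)\diam(G{-}F,S,T)$ separately, writing $D=\diam(G,S,T)$ and $D^*=\diam(G{-}F,S,T)$ for brevity and noting that $D\le D^*$ since deleting edges never decreases distances. I adopt the convention that an empty maximum equals $0$; this occurs exactly when no shortest path of $G$ between $S$ and $T$ meets $V(F)$, in which case $S'=T'=\emptyset$. The workhorse observation is that if a subpath of a shortest path $\pi_{s,t}$ of $G$ meets $V(F)$ only at its own endpoints, then it contains no edge of $F$ and hence survives in $G{-}F$, so the $G$-distance and the $(G{-}F)$-distance between its endpoints coincide; this is precisely the property built into $S'$ and $T'$, and it relies on the uniqueness of shortest paths (so that subpaths of shortest paths are shortest paths).

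For the lower bound I would fix $s^*\in S$, $t^*\in T$ with $d_{G-F}(s^*,t^*)=D^*$ and inspect the unique shortest path $\pi_{s^*,t^*}$ of $G$. If it uses no edge of $F$ it persists in $G{-}F$, so $D^*\le d_G(s^*,t^*)\le D\le\widehat{D}$. Otherwise it contains at least two distinct vertices of $V(F)$; let $u$ be the first and $v$ the last vertex of $V(F)$ along $\pi_{s^*,t^*}$, so $u\ne v$. By construction $u\in S'$ and $v\in T'$, and since the prefix $\pi_{s^*,u}$ and the suffix $\pi_{v,t^*}$ survive in $G{-}F$ we get $d_{G-F}(s^*,u)=d_G(s^*,u)$ and $d_{G-F}(v,t^*)=d_G(v,t^*)$. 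Because $u$ precedes $v$ on $\pi_{s^*,t^*}$, the subpath relation gives $d_G(s^*,u)+d_G(v,t^*)\le d_G(s^*,t^*)\le D$. Combining this with $D^*=d_{G-F}(s^*,t^*)\le d_{G-F}(s^*,u)+d_{G-F}(u,v)+d_{G-F}(v,t^*)$ yields $d_{G-F}(u,v)\ge D^*-D$, and since $\D(u,v,F)\ge d_{G-F}(u,v)$ with $(u,v)\in S'\times T'$, we conclude $\widehat{D}=D+\max_{(u',v')\in S'\times T'}\D(u',v',F)\ge D+(D^*-D)=D^*$.

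For the upper bound I would take the pair $(u,v)\in S'\times T'$ attaining the maximum. From $u\in S'$ there is $s\in S$ such that $\pi_{s,u}$ is a prefix of a shortest path of $G$ into $T$ meeting $V(F)$ only at $u$, hence $d_{G-F}(s,u)=d_G(s,u)\le D$; symmetrically from $v\in T'$ there is $t\in T$ with $d_{G-F}(t,v)=d_G(t,v)\le D$. Then the triangle inequality in $G{-}F$ together with $d_{G-F}(s,t)\le D^*$ gives $d_{G-F}(u,v)\le d_{G-F}(u,s)+d_{G-F}(s,t)+d_{G-F}(t,v)\le 2D+D^*\le 3D^*$. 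Applying the stretch guarantee of $\D$ yields $\D(u,v,F)\le\sigma\,d_{G-F}(u,v)\le 3\sigma D^*$, and therefore $\widehat{D}=D+\D(u,v,F)\le D^*+3\sigma D^*=(1+3\sigma)D^*$.

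The step I expect to be the crux is the lower bound: one must exploit that the first and the last vertex of $V(F)$ on the path lie on one and the same shortest path and in the correct order, so that the surviving prefix and suffix lengths sum to at most $d_G(s^*,t^*)\le D$ — this is where both the uniqueness of shortest paths and the prefix/suffix structure of $S'$ and $T'$ are genuinely used, and it is also what forces the choice of $\diam(G,S,T)$ as the additive offset in $\widehat{D}$. Degenerate cases (a pair in $S\times T$ disconnected in $G{-}F$, or $S'\times T'=\emptyset$) should still be checked but fall out of the same case split, provided $\D$ reports $\infty$ on disconnected pairs.
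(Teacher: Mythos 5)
Your proof is correct and follows essentially the same route as the paper's: lower bound by taking the first and last $V(F)$-vertices $u,v$ on a shortest path, observing $u\in S'$, $v\in T'$, that the surviving prefix and suffix have total $G$-length at most $\diam(G,S,T)$, and applying the triangle inequality; upper bound by taking witnesses $s,t$ for the maximizing pair $(u,v)\in S'\times T'$ and applying the triangle inequality and the stretch of $\D$. The only cosmetic difference is that you specialize to a diameter-achieving pair and rearrange to $d_{G-F}(u,v)\ge D^*-D$, whereas the paper argues for an arbitrary $s,t$ and concludes directly; these are the same argument.
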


\begin{proof}[Proof.]
	Let $s \in S$ and $t \in T$ be two vertices. 
	We first show $d_{G-F}(s,t) \leq \widehat{D}$.
	That means,
	the returned value never underestimates the $ST$-diameter of $G{-}F$.
	We only need to prove the case in which some of the failing edges in $F$ belong to $\pi_{s,t}$;
	otherwise, $d_{G-F}(s,t) = d_{G}(s,t) \leq \diam(G,S,T) \leq \widehat{D}$.
	Consider all failing edges on $\pi_{s,t}$ and let $u^*$ be the endpoint of such an edge 
	that is closest to $s$.
	Also, let $v^*$ be the endpoint closest to $t$ among all edges in $F \cap E(\pi_{s,t})$. 
	We thus have $u^* \in S_F$ and $v^* \in T_F$
	and $d_{G-F}(s,u^*)=d_G(s,u^*)$ as well as $d_{G-F}(v^*,t)=d_G(v^*,t)$. 
	Since $\pi_{s,u^*}$ and $\pi_{v^*,t}$ are vertex-disjoint subpaths of $\pi_{s,t}$, we get
	\begin{equation*}
		d_{G-F}(s,u^*)+d_{G-F}(v^*,t) = d_G(s,u^*)+d_G(v^*,t) \le d_G(s,t) \leq \diam(G,S,T).
	\end{equation*}
	
	Recall that the distance sensitivity oracle $\D$ satisfies $d_{G-F}(u,v) \le \D(u,v,F)$ for all admissible queries $(u,v,F)$.
	Since $(u^*, v^*) \in S_F \times T_F$, we have 
	\begin{equation*}
		d_{G-F}(u^*,v^*) \le \max_{(u,v) \in S_F\times T_F} d_{G-F}(u,v) \le \max_{(u,v) \in S_F\times T_F}\D(u,v,F).
	\end{equation*}	
	Combining the two facts gives
	\begin{align*}
		d_{G-F}(s,t) &\leq d_{G-F}(s,u^*)+d_{G-F}(u^*,v^*)+d_{G-F}(v^*,t)\\
			&\leq \diam(G,S,T)+\max_{(u,v) \in S_F\times T_F}\D(u,v,F) =  \widehat{D}.
	\end{align*}

	We now prove the other inequality $\widehat{D} \leq (1{+}3\sigma) \cdot \diam(G{-}F,S,T)$,
	where $\sigma$ is the stretch of $\D$. 
	Let $u \in S_F$ and $v \in T_F$ be two vertices.	
	Our claim will follow once we have established that $d_{G-F}(u,v) \le 3 \cdot \diam(G{-}F,S,T)$.
	By definition, the reason why $u$ is in $S_F$ is that
	there exist $s_u \in S$ and $t_u \in T$ such that $u$ is
	on the shortest $s_u$-$t_u$-path $\pi_{s_u,t_u}$ in $G$
	and $d_{G-F}(s_u,u) = d_{G}(s_u,u)$.
	We get
	\begin{equation*}
		d_{G-F}(s_u,u) = d_{G}(s_u,u) \le d_G(s_u,t_u) \le \diam(G,S,T).
	\end{equation*}	
	By the same argument, the fact $v \in T_F$ is witnessed by a vertex $t_v \in T$
	with $d_{G-F}(v,t_v) \le \diam(G,S,T)$.
	Now note that the graph $G{-}F$ is undirected, which means $d_{G-F}(s_u,u) = d_{G-F}(u,s_u)$.
	The $ST$-diameter of $G{-}F$ is also never smaller than that of $G$.
	Thus,
	\begin{align*}
    	d_{G-F}(u,v) &\leq d_{G-F}(u,s_u)+d_{G-F}(s_u,t_v)+d_{G-F}(t_v,v)\\
			&\leq  2 \cdot \diam(G,S,T)+\diam(G{-}F,S,T)  \leq 3 \cdot \diam(G{-}F,S,T).
	\end{align*}
	
	The final bound on the stretch follows from the fact that the $f$-DSO overestimates any fault-tolerant distance
	by at most a factor $\sigma$,
	\begin{align*}
		\widehat{D} &= \diam(G,S,T)+\max_{(u,v) \in S_F\times T_F}  \D(u,v,F)\\
			&\le \diam(G,S,T)+\max_{(u,v) \in S_F\times T_F}  \sigma \cdot d_{G-F}(u,v)\\
			&\le \diam(G,S,T) + \sigma \cdot 3 \diam(G{-}F,S,T) \leq (1+3\sigma) \diam(G{-}F,S,T).	\end{align*}
\end{proof}

\subsection{Basic Data Structure for \texorpdfstring{\boldmath{$S_F$}}{S'} 
	and \texorpdfstring{\boldmath{$T_F$}}{T'}}
\label{subsec:ST-diameter_datastructure}

Given the failure set $F$, 
the set $S_F$ contains all $v \in V(F)$ such that there exists $s \in S$ and $t \in T$
for which $v$ is the closest endpoint to $s$ of a failing edge on the shortest path $\pi_{s,t}$.
The set $T_F$ is defined analogously.
We now describe the data structure that computes $S_F$ and $T_F$.
The exposition focuses on $S_F$, the structure for  $T_F$ uses the same ideas.

The first variant of our construction takes space $O(n^2)$
in addition to the space $\mathsf{S}$ of the underlying $f$-DSO.
For any vertex $v \in V$, let $\widetilde{T}_v$ denote the shortest path tree
of the graph $G$ rooted in $v$.
We use the notation $\widetilde{T}_v$ to avoid confusion with the sets $T$ and $T_F$.
The preprocessing algorithm of our data structure computes, for each $v \in V$,
the tree $\widetilde{T}_v$ and marks certain nodes in it.
A node $s \in S$ is marked in the tree $\widetilde{T}_v$ if there is some $t \in T$ 
such that $v$ lies on the shortest path $\pi_{s,t}$.
For any two vertices $s \in S$ and $t \in T$, 
$\pi_{s,t}$ contains $v$ if and only if $d_G(s,t) = d_G(s,v)+ d_G(v,t)$.
We used here that the shortest paths are unique.
The algorithm can thus compute all-pairs distances in $G$ in time\footnote{
	The time needed for this step reduces to $O(mn)$ in case $G$ is unweighted,
	has only small integer weights, or floating points in exponent-mantissa representation
	using Thorup's algorithm~\cite{Thorup99UndirectedSSSPinLinearTime}.
}
$O(mn + n^2 \log n)$
and mark the relevant nodes in \emph{all} trees with the obvious $O(n|S||T|)$-time algorithm.

\begin{figure}[t]
	\centering
	\includegraphics[scale=1]{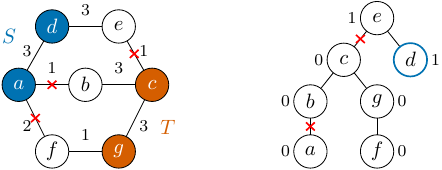}
	\caption{Illustration of the basic data structure for the set $S_F$.
		On the left is the example graph with failing edges $\{a,b\}$, $\{a,f\}$, and $\{c,e\}$.
		On the right is the shortest-path tree $\widetilde{T}_e$ rooted in $e$.
		The nodes are annotated with their $count$.
		Only the node $d \in S$ is marked as none of the shortest paths that start in $a \in S$ 
		and end in some vertex of $T$ use $e$.
		The set $F_0$ consists of the edges $\{a,b\}$ and $\{c,e\}$,
		the former is dominated by the latter.	
		Since the marked node $d$ remains reachable from $e$ in $\widetilde{T}_e - F_0$,
		the root $e$ is included in the set $S_F$.
	}
\label{fig:marked_tree}
\end{figure}

Additionally, each node $u$ of $\widetilde{T}_v$ is annotated with its pre- and post-order traversal number and the value $count_v(u)$,
the latter being the number of marked nodes in the subtree $(\widetilde{T}_v)_u$ rooted in $u$.
Each node except for the root $v$ also stores a pointer to its parent.
The annotations do not increase the preprocessing time or storage space by more than a constant factor.
In particular, for a fixed tree $\widetilde{T}_v$, all values $count_v(u)$ are computable in $O(n)$ time in bottom-up fashion.
The total time needed for preprocessing 
is thus $O(mn + n^2 \log n + n|S||T|)$ and the space is $O(n^2)$.
 
To answer a query $F$, the algorithm scans all the vertices $v \in V(F)$
and decides which of them to include in $S_F$.
An illustration is given in \Cref{fig:marked_tree}.
The graph $\widetilde{T}_v-F$ is a forest of rooted trees.
Possibly some of the trees degenerated into isolated vertices.
We claim that $v \in S_F$ holds if and only if $\widetilde{T}_v-F$
contains a marked node that is still reachable from $v$.
To see this, observe that the uniqueness of shortest paths implies their consistency.
That means, if $v$ is on some shortest path $\pi_{s,t}$,
then the subpath from $s$ to $v$ is the shortest path $\pi_{s,v}$
(and the one from $v$ to $t$ is $\pi_{v,t}$).
Since $G$ is undirected, $\pi_{s,v}$ is contained in $\widetilde{T}_v$.
Now consider the two defining conditions for $v \in S_F$.
Vertex $v$ lying on the path $\pi_{s,t}$ means $s \in S$ is marked,
and $\pi_{s,v}$ not having a failing edge is equivalent to $s$ being reachable in $\widetilde{T}_v-F$.

To check whether $v \in S_F$, the algorithm first computes
the set $F_0  = F \cap E(\widetilde{T}_v)$ of those edges whose failure
affect a shortest path starting in $v$.
An edge $\{u,w\} \in F$ is in $F_0$ iff the parent of $u$ in $\widetilde{T}_v$ is $w$
or the parent of $w$ is $u$.
When talking about tree edges,
we adopt the notation $(u,w) \in F_0$ to indicate that $u$ is the parent of $w$.
The edges are still meant to be undirected.
Next, we define a notion of domination among edges in $F_0$.
We say that an edge $(x,y) \in F_0$ 
is \emph{dominated} by another edge $(u,w) \in F_0$ 
if $(x,y)$ is in the subtree $(\widetilde{T}_v)_w$.
We show how to identify all dominated edges in time $O(f \log f)$
using the tree-traversal numbers stored during preprocessing.
For a node $u$, let $pre(u)$ denote the pre-order number of $u$
and $post(u)$ its post-order number.
We further say $x$ is \emph{a parent node} if there exists a $y$ such that $(x,y) \in F_0$,
that is, $x$ is \emph{the} parent node in a failing edge in $\widetilde{T}_v$.
Likewise, we say $w$ is \emph{a child node} if there is some edge $(u,w) \in F_0$.
These notions are not mutually exclusive.
An edge in $F_0$ is dominated iff its parent node $x$ has some child node as an ancestor
(which may be $x$ itself).
Finally, some child node $w$ is an ancestor of $x$ if and only if $pre(w) \le pre(x)$ and $post(w) \ge post(x)$.

To exploit this connection, we first sort the full set $V(F_0)$
by descending pre-order number, resulting in the list $\textit{pre-all}^{\downarrow}$.
Let further $\textit{post-child}^{\downarrow}$ denote only the child nodes 
ordered by descending post-order number.\footnote{
	The list $\textit{post-child}^{\downarrow}$ only containing child nodes
	ensures that we find the remaining child node with maximum post-order number
	in constant time.
}
We scan through $\textit{pre-all}^{\downarrow}$ consuming both lists in the process.
Let $x$ be the first entry of $\textit{pre-all}^{\downarrow}$.
If it is not a parent node (i.e., only a child node and not both), 
then remove $x$ from both lists.
If $x$ is a parent node, check the current-first entry $w$ of $\textit{post-child}^{\downarrow}$.
Since $w$ has not been removed yet,
it is a child with $pre(w) \le pre(x)$.
Therefore, the edge $(x,y) \in F_0$ is dominated if and only if $post(w) \ge post(x)$.
After this check, $x$ is again removed from both lists (if present) and we continue
with the new first entry of $\textit{pre-all}^{\downarrow}$.
The bottleneck is to sort the two lists in time 
$O(|V(F_0)| \cdot \log |V(F_0)|) = O(f \log f)$.
We denote the set of the remaining non-dominated edges by $F_0'$.

Recall that we include the vertex $v$ in $S_F$
if the number of marked node
in the connected component of $v$ in $\widetilde{T}_v-F$ is positive.
Clearly, this connected component is the same in $\widetilde{T}_v -F_0$
and even in $\widetilde{T}_v -F_0'$, i.e., when ignoring dominated edges.
We now use the stored $count$s.
The value $count_v(w)$ is the number of marked nodes in the subtree $(\widetilde{T}_v)_w$
\emph{before} failures. 
No two edges in $F_0'$ dominate each other.
Hence, the subtrees of $\widetilde{T}_v$ rooted at the child nodes of edges in $F_0'$
span exactly the vertices that are \emph{not} in the same connected component as $v$
in $\widetilde{T}_v -F_0'$.
By summing the $count$-values over all child nodes\footnote{
	The reason to compute the subset $F_0'$ of non-dominated
	edges is to avoid double-counting here.
}
and comparing it to $count_v(v)$,
we get the desired information.

Since we have to construct the set $F_0'$ for each root $v \in V(F)$ separately,
computing $S_F$ takes total time $O(f^2 \log f)$, same for $T_F$.

\subsection{A More Compact Data Structure}
\label{subsec:ST-diameter_smallSensitivity}

We now design an alternative data structure that has a space overhead of 
$O(2^{f/2} \nwspace n^{3/2} \sqrt{\log n})$, rather than $O(n^2)$, 
provided that $f \le \log_2(n)$.
This will, however, increase the query time.
The idea is to store the information of the trees $\widetilde{T}_v$ in a more compact way. 
For every vertex $v \in V$, we define a new representation $\T_v$ of the tree $\widetilde{T}_v$ by first removing unnecessary parts and then replacing long paths with single edges.
This corresponds to the two steps of the compression described below.
For the first one, we need the following definition.
We say a subtree $\T_v$ of $\widetilde{T}_v$ preserves the \emph{source-to-leaf reachability}
if, for every set $F$ of up to $f$ edges,
there is a marked node of $\widetilde{T}_v$ that is reachable from the source $v$ in $\widetilde{T}_v-F$ 
if and only if there is a leaf of $\T_v$ that is reachable from $v$ in $\T_v-F$.

\paragraph*{First Compression Step}
Our goal is to preserve the source-to-leaf reachability
when compressing $\widetilde{T}_v$.
Let $S' \subseteq S$ be the set of nodes that are marked in $\widetilde{T}_v$.
Using an algorithm by Petruschka~\cite{Petruschka24FaultEquivalenLCA_ArXiv},
we compute in time $O(n + |S'|) = O(n)$ a set $\L_v \subseteq V$ of at most $2^{f-1}$ nodes 
such that, for every $F \subseteq E$ with $|F| \le f$,
a node of $S'$ is reachable from $v$ in $\widetilde{T}_v-F$
iff a node in $\L_v$ is still reachable.\footnote{
	The result in \cite{Petruschka24FaultEquivalenLCA_ArXiv} is stated in terms of vertex failures.
	We can adapt it by subdividing every edge in $\widetilde{T}_v$
	and applying it to a failure set of up to $f$ of those new ``edge-vertices''.
}
The main difference between $S'$ and $\L_v$ is that the size of the latter is bounded
by a function of $f$.
Of course, if $|S'| \le 2^{f-1}$, we can forgo this step entirely.
This is particularly relevant if already $S$ and $T$ have fewer than $2^{f-1}$ elements.
We define the tree representation $\T_v$ as the smallest subtree of $\widetilde{T}_v$
that contains the root $v$ and $\L_v$.
The notation $\L_v$ is mnemonic of ``leaves''.
It is clear from the construction of $\L_v$ that $\T_v$ preserves 
the root-to-leaf reachability of $\widetilde{T}_v$.

\paragraph*{Second Compression Step}
After the first compression step, the tree $\T_v$ contains at most $2^{f-1}$ leaves.
It may still be the case that the total number of nodes is large
due to the presence of very long paths connecting two branch nodes,
that is, nodes with two or more children.
We now contract those paths to single edges.

We use the critical path framework of Alon, Chechik, and Cohen~\cite{AlonChechikCohen19CombinatorialRP}.
Let $x$ and $y$ be two consecutive branch nodes of $\T_v$,
i.e., $x$ is an ancestor of $y$ in $\T_v$ and the internal nodes
of the path $\pi_{x,y}$ all have degree $2$ in $\T_v$. 
We say that $\pi_{x,y}$ is \emph{critical} if it has at least $L = \sqrt{n \log_2(n)/2^{f-1}}$ edges.
If so, we replace $\pi_{x,y}$ in $\T_v$ with the \emph{representative edge} $\{x,y\}$
and add $\pi_{x,y}$ to the set $\P$ of critical paths.
This collection can be computed in total time $O(n^2)$ by depth-first searches in each of the $n$ trees.
Eventually, it contains at most $q = O(n^2/L) = O(2^{f/2} \nwspace n^{3/2}/\sqrt{\log n})$ paths
as each tree contributes $O(n/L)$ paths. 
A deterministic hitting set $B$ for $\P$ can be computed in time $\Otilde(qL + n^2/L) = \Otilde(n^2)$.
It has cardinality
\begin{align*}
	|B| &= O\!\left(n \, \frac{\log q}{L}\right) 
		 = O\!\left(2^{f/2}\nwspace n^{1/2} \left(\sqrt{\log n} 
			+ \frac{f}{\sqrt{\log n}} \right)\!\right)\\[.25em]
		&= O\!\left(2^{f/2}\nwspace n^{1/2} \sqrt{\log n} \right).
\end{align*}
The last estimate uses the assumption $f \le \log_2(n)$.

For each \emph{pivot} $b \in B$, we store the shortest-path tree $\widetilde{T}_b$ of the original graph $G$ rooted in $b$. 
Note that a critical path $\pi_{x,y} \in \P$ is contained in $\widetilde{T}_b$ 
for any pivot $b$ that hits $\pi_{x,y}$.
Moreover, this $b$ is the lowest common ancestor of $x$ and $y$ in the tree $\widetilde{T}_b$. 

The final data structure consists of the trees $\{\T_v\}_{v \in V}$ and $\{\widetilde{T}_b\}_{b \in B}$
with some small amount of extra information.
For each tree ($\widetilde{T}_b$ and $\T_v$ alike),
we store a data structure that answers lowest common ancestor (LCA) queries in constant time.
Such a structure can be constructed in time/space that is linear in the size of the tree \cite{Beetal00}.
The trees $\T_v$ additionally have a dictionary for the set of their representative edges.
Since there are at most $O(2^f)$ branch nodes in any tree,
this also upper bounds the number of representative edges.
The dictionaries can be implemented to take space $O(2^f)$
and have a constant worst-case query time~\cite{HagerupMiltersenPagh01DeterministicDictionaries}.
For any representative edge $\{x,y\}$ of $\T_v$,
we additionally have a pointer to the tree $\widetilde{T}_b$ of an (arbitrary) pivot $b$ 
that hits $\pi_{x,y}$.
After the second compression step, any tree $\T_v$ contains at most 
$O(2^f L) = O(2^{f/2} \nwspace n^{1/2} \sqrt{\log n})$ nodes.
Since there are $n$ such trees, the LCA structures, and dictionaries take 
$O(2^{f/2} \nwspace n^{3/2} \sqrt{\log n})$ space.
Likewise, the trees $\widetilde{T}_b$ for all the pivots in $B$
take $O(|B|n) = O(2^{f/2} \nwspace n^{3/2} \sqrt{\log n})$ space.
If $M = \max(|S|,|T|)$ is at most $2^{f-1}$,
then, each tree $\T_v$ has $|\L_v| \le M$ leaves
and we can replace every occurrence of $2^{f-1}$ in the above derivation by $M$.

Regarding the preprocessing time,
running Petruschka's algorithm~\cite{Petruschka24FaultEquivalenLCA_ArXiv} for all $n$ trees $\widetilde{T}_v$ (if necessary)
and obtaining the pivots $B$ takes total time $\Otilde(n^2)$.
Computing the LCA structures for the 
trees $\{\T_v\}_{v \in V}$ and $\{\widetilde{T}_b\}_{b \in B}$
can be done in time that is linear in their total size,
i.e., $O(2^{f/2} \nwspace n^{3/2} \sqrt{\log n}) = \Otilde(n^2)$.
Therefore, the $\Otilde(mn + n |S||T|)$ time to construct and mark the original trees $\{\widetilde{T}_v\}_v$
(see \Cref{subsec:ST-diameter_datastructure}) still dominates the preprocessing.

Consider a set $F$ of at most $f$ failing edges.
We now explain how to compute $S_F$ from the compressed trees in time $O(2^f f^2)$
(resp., in time $O(f^2 M)$).
The set $S_F$ are those vertices $v \in V(F)$ 
such that there is a shortest $s$-$t$-path $\pi_{s,t}$
with $s \in S$ and $t \in T$ that contains $v$,
but the prefix $\pi_{s,v}$ is free from failing edges.
Since the root-to-leaf reachability is preserved, this is equivalent
to having a leaf in $\L_v$ that is reachable from $v$ in $\T_v-F$. 
Conceptually speaking, we can safely remove all non-representative edges in $\T_v$ that are in $F$.
For the representative edges $\{x,y\}$,
we have to look up information in the \emph{other} tree $\widetilde{T}_b$ 
with root $b \in B$ associated with $\{x,y\}$.
That edge is removed from $\T_v$ if there is a failing edge in $F$
that is contained in the path $\pi_{x,y}$ in $\widetilde{T}_b$.
We then return to $\T_v$ check whether some leaf from $\L_v$ is still connected to the root.

In more detail, we first compute the non-representative failing edges in $\T_v$.\footnote{
	In \Cref{subsec:ST-diameter_datastructure}, we used the parent pointers 
	of individual nodes for this task.
	The same solution is possible here and would increase the total space requirement only by a constant factor.
}
For some edge $\{u, w\} \in F$, we check whether it is in the dictionary of representative edges of $\T_v$.
If not, we test whether the LCA of $u$ and $w$ in $\T_v$ is in $\{u,w\}$.
If so, we have $\{u,w\} \in E(\T_v)$.
This can be done for all of $F$ with $O(f)$ constant-time look-ups and LCA queries.
To handle an representative edges $\{x,y\}$,
we follow the pointer to the tree $\widetilde{T}_b$.
The edge $\{u,w\}$ lies on $\pi_{x,y}$ in $\widetilde{T}_b$
if and only if one of two conditions hold:
\vspace*{.25em}
\begin{enumerate}
	\item[(\emph{i})] the LCA of $u$ and $x$ is $u$ and the LCA of $w$ and $x$ is $w$;
	\vspace*{.25em}
	\item[(\emph{ii})] the LCA of $u$ and $y$ is $u$ and the LCA of $w$ and $y$ is $w$.
\end{enumerate}
\vspace*{.25em}

The whole check takes time $O(2^f f)$ 
since there are at most $O(2^f)$ representative edges per tree.
Slightly abusing notation, let $F_0$ be the set of edges removed from $\T_v$,
including possibly some representative edges.
$F_0$ may no longer be a subset of $F$,
but we still have $|F_0| \le |F|$.
We test for each leaf in $\L_v$ and each edge in $F_0$ whether the latter
lies on the path from the root $v$ to that leaf.
This takes a constant number of LCA queries per test and thus $O(2^f f)$ time.
In summary, the tree $\T_v$ can be queried in time $O(2^f f)$ 
and we have to do this for every $v \in V(F)$,
resulting in a $O(2^f f^2)$ query time.

\section{Single-Source \textit{sT}-Diameter Oracles}
\label{sec:single-source-f-FDO}

The next question we address is how to construct $sT$-diameter oracles,
that is, a data structure that reports maximum replacement distance 
from a single source $s$ to a set of vertices $T$.
To build the $f$-FDO-$sT$, we use a single-source distance sensitivity oracle,
giving us access to estimates of all replacement distances from $s$.
We restate the relevant theorem below.
Its proof uses similar ideas as those shown in \Cref{sec:f-dso-st},
but the single-source setting allows for improvements in all parameters.

\stdiam*

\begin{proof}[Proof.]
	Let $\D$ denote the underlying single-source $f$-DSO.
	The preprocessing algorithm for the $f$-FDO-$sT$ constructs $\D$ with source $s$
	and computes the shortest-path tree $\widetilde{T}_s$ of $G$ rooted in $s$.
	Each node $v \in V$ is annotated
	with a pointer to its parent node in $\widetilde{T}_s$ and its respective number
	in the pre-order and post-order traversal of the tree.
	Similarly as above, the algorithm also computes the value $count(v)$ for every $v$.
	This is the number of all vertices in the set $T$ that are descendants of $v$ in $\widetilde{T}_s$,
	possibly including $v$ itself.
	(A difference to \Cref{subsec:ST-diameter_smallSensitivity} is that 
	no marking of vertices is needed here.)
	Finally, the preprocessing algorithm stores the maximum distance $\diam(G,s,T) =\max_{t\in T}d_G(s,t)$ 
	from the root to the vertices in the set $T$.
	The preprocessing takes total time $\mathtt{P}+ O(m + n \log n)$ in general weighted graphs
	and, again, can be reduced to $\mathtt{P} + O(m)$
	for certain classes of weights~\cite{Thorup99UndirectedSSSPinLinearTime}.
	Storing the oracle and the tree takes $\mathtt{S} + O(n)$ space.

	For the query, let $F \subseteq E$ be a set of up to $f$ failing edges
	and $F_0 = F \cap E(\widetilde{T}_s)$ those failures that are in the tree.
	Consider the forest of rooted trees $\widetilde{T}_s - F_0$ and
	let $R_F$ be the set of roots of those trees that contain some vertex from $T$.
	For any $v \in V$, let $\D(v,F)$ be the $\sigma$-approximation of 
	the replacement distance $d_{G-F}(s,v)$ computed by the DSO $\D$.
	Our $sT$-diameter oracle answers the query $F$ by reporting
	\begin{equation*}
		\widehat{D}= \diam(G,s,T) + \max_{r\in R_F} \D(r,F).
	\end{equation*}
	
	We defer the discussion of how to obtain the set $R_F$ and first prove 
	that $\widehat{D}$ is an \mbox{$(1{+}2\sigma)$-approximation} of $\diam(G{-}F,s,T)$.
	Let $t \in T$ be a target vertex
	and $r \in R_F$ the root of the subtree in $\widetilde{T}_s - F_0$ that contains $t$.
	By the choice of $r$, we have $d_{G-F}(r,t) = d_{G}(r,t)$.
	Also, it holds that $d_{G-F}(s,r) \le \D(r,F)$ since the oracle $\D$ never underestimates
	the replacement distance.
	Combining this with the definition of our estimate $\widehat{D}$ gives
	\begin{align*}
		d_{G-F}(s,t) &\le d_{G-F}(s,r) + d_{G-F}(r,t) = d_{G-F}(s,r) + d_G(r,t)\\
			&\le d_{G-F}(s,r) + d_G(s,t) \le \D(r,F) + \diam(G,s,T) \le \widehat{D}
	\end{align*}
	For the last inequality, we need the maximization over all roots in $R_F$
	in the definition of our estimate $\widehat{D}$.
	As the above argument holds for any target $t$, we get $\diam(G{-}F,s,T)\leq \widehat{D}$.

	For the other direction,
	let $r^* = \argmax_{r \in R_F} \D(r,F)$ be the root that maximizes the \emph{approximate}
	replacement distance from $s$ returned by $\D$.
	Let further $t\in T$ be some target vertex in the subtree of $\widetilde{T}_s - F_0$ 
	that is rooted in $r^*$.
	We thus have
	\begin{equation*}
		d_{G-F}(s,r^*) \leq d_{G-F}(s,t)+d_{G}(t,r^*) \leq  d_{G-F}(s,t)+ d_{G}(t, s) 
		\leq 2 \cdot d_{G-F}(s,t).
	\end{equation*}
	The first inequality uses that $G$ is undirected, so that we can go ``up'' 
	the tree from $t$ to $r^*$.
	Here it is important that we do not just use any root in $\widetilde{T}_s - F_0$,
	but one that actually has some target vertex in its tree.
	The last inequality stems from the replacement distance from $s$ to $t$ being never
	smaller than the original $s$-$t$-distance.
	
	We combine this with the fact that $\D(r^*,F)$ is a $\sigma$-approximation of the true
	replacement distance $d_{G-F}(s,r^*)$.
	\begin{multline*}
		\widehat{D}= \diam(G,s,T) + \D(r^*,F) \leq \diam(G,s,T)+\sigma  \cdot d_{G-F}(s,r^*)\\
			\leq \diam(G,s,T)+2\sigma \cdot  d_{G-F}(s,t) \leq (1+2\sigma) \cdot \diam(G{-}F,s,T).
	\end{multline*}
	In summary, we have $\diam(G{-}F,s,T)\leq \widehat{D} \le (1+2\sigma) \cdot \diam(G{-}F,s,T)$,
	as desired.

	The forest $\widetilde{T}_s- F_0$ consists of $|F_0|+1 \le f+1$ trees,
	so the set $R_F$ contains at most this many roots.
	Given $R_F$, computing $\widehat{D}$ takes time $O(f\mathtt{Q})$,
	where $\mathtt{Q}$ is the query time of the oracle $\D$.
	It remains to show how to compute $R_F$ from $F$ in time $O(f \log f)$.
	Recall that we know the parent of every non-root node in $\widetilde{T}_s$.
	We use it to first obtain $F_0$ from $F$ in time $O(f)$
	as an edge $\{a,b\}$ is in $\widetilde{T}_s$ if and only if $a$ is parent of $b$ or vice versa. 
	
	For any failing edge $e\in F_0$, let $r(e)$ be the endpoint of $e$ that is farther
	from the source $s$.
	Note that the roots of the trees in $\widetilde{T}_s - F_0$ are either $s$
	or such a ``lower'' endpoint of a failing edge.
	We use $R=\{r(e) \,{\mid}\, e\in F_0\} \cup \{s\}$ to denote them.
	The subset $R_F \subseteq R$ of the roots we are interested in 
	are those whose tree contain a vertex from the target set $T$.
	We have to decide for each $r \in R$, whether it is in $R_F$.
	
	We claim this can be done in time $O(f \log f)$.
	To do so, we first partition $R$ depending on the relative position
	of its elements in the tree $\widetilde{T}_s$ without failures.	
	Consider a different root $r' \,{\in}\, R{\setminus} \{r\}$.
	We say $r'$ is an \emph{immediate descendant} of $r$
	if $r'$ is a descendant of $r$ in $\widetilde{T}_s$
	and there is no other element of $R$ on the $r$-$r'$-path.
	Let $R(r)$ be the (possibly empty) subset of immediate descendants of $r$.
	Conversely, we say $r$ is the \emph{immediate ancestor} of all the $r' \in R(r)$.
	The source $s$ is the only root that has no immediate ancestor and
	the sets $\{R(r)\}_{r \in R}$ partition $R{\setminus}\{s\}$,
	see \Cref{fig:tree_reconstruction}.
	
	\begin{figure}[t]
	\centering
	\includegraphics[scale=1]{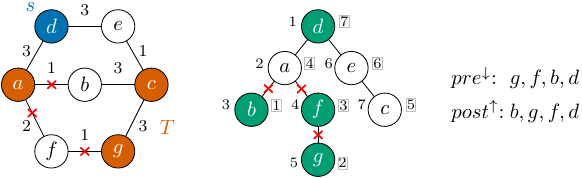}
	\caption{Illustration of the proof of \Cref{thm:sT-diam}.
		On the left is the example graph, now with a single source $d$, target set $T = \{a,c,g\}$,
		and failing edges $\{a,b\}$, $\{a,f\}$, and $\{f,g\}$.
		In the center is the shortest-path tree $\widetilde{T}_d$ rooted in $d$.
		Each node is annotated with its pre-order number (left) and post-order number (right, boxed).
		The root set $R = \{b,d,f,g\}$ is highlighted (in green).
		The subsets of immediate descendants are $R(d) = \{b,f\}$ and $R(f) = \{g\}$,
		the sets $R(b)$ and $R(g)$ are empty.
		On the right are the two sorted copies of $R$.
	}
\label{fig:tree_reconstruction}
\end{figure}
	
	We can compute this partition using the traversal numbers again,
	but this time with a somewhat simpler approach
	since we do not need to distinguish between parent and child nodes.
	Let $pre(r)$ denote the pre-order number of $r$ and $post(r)$ its post-order number.	
	Root $r$ is a (proper but not necessarily immediate) ancestor of $r'$
	if and only if $pre(r) < pre(r')$ and $post(r) > post(r')$.
	Out of all proper ancestors of $r'$, the \emph{immediate} ancestor $r$ 
	is the one that has minimum post-order number.
	We again use two sorted lists, both contain all roots from $R$.
	The first list $pre^{\downarrow}$ is sorted by decreasing pre-order number,
	and the second one $post^{\uparrow}$ is sorted by \emph{increasing} post-order number.
	Let $r'$ be the first entry of $pre^{\downarrow}$.
	Hence, the entries of $post^{\uparrow}$ that come after $r'$
	are exactly its proper ancestors.
	By the above argument, the immediate ancestor $r$ we are looking for
	is the successor of $r'$ in $post^{\uparrow}$.
	If we now remove $r'$ from both lists,
	we ensure that again all remaining roots have smaller pre-order number than the 
	(new) first entry of $pre^{\downarrow}$.
	We iterate this until both lists only contain the source $s$
	(which is without any proper ancestors),
	and in each step assign the current first entry of $pre^{\downarrow}$ to its immediate ancestor.
	This way, any $r \in R$ gets assigned all the roots in $R(r)$. 
	Sorting the two lists takes time $O(|R| \log |R|) = O(f \log f)$.

	Finally, we can compute the subset $R_F \subseteq R$.
	Observe that a vertex $r \in R$ lies in $R_F$
	if and only if there is at least one vertex of $T$ 
	that falls into the subtree of $\widetilde{T}_s$ rooted in $r$ 
	but not in any of the subtrees rooted in (proper) descendants of $r$ in $R$. 
	We decide this via the values $count(r)$, that is the number of targets from $T$
	in the subtree of $\widetilde{T}_s$ rooted in $r$. 
	We only need to consider the \emph{immediate} descendants in $R(r)$. 
	If the element of $T$ is in some lower subtree, 
	then it is also accounted for by a lower root. 	
	Putting this all together shows that some $r \in R$ is in $R_F$ 
	if and only if $count(r) > \sum_{r'\in R(r)}count(r')$.
	Since the sets $\{R(r)\}_r$ are a partition, those checks can be performed
	for all $r \in R$ in total time $O(f)$.
\end{proof}

We now generalize the result to multiple sources.
That means, we build an $f$-FDO-$ST$ for a general set $S$
while still only having access to a \emph{single-source} DSO.
The naive solution would be to iterate the above construction for the $sT$-case
for every source $s \in S$.
We indeed reduce the case of diameter estimation from multiple sources to that from a single source,
but do it more efficiently.
As it turns out, it is enough to construct the $sT$-diameter oracle 
for only \emph{two} arbitrary vertices $s \in S$ and $t \in T$.

\begin{restatable}{lemma}{stdiamfromstdiam}
\label{lem:ST-diam-from-sT-diam}
	Let $G=(V,E)$ be an undirected positively edge-weighted graph.
	Let \mbox{$S,T \subseteq V$} be non-empty sets of vertices, 
	and $s\in S$ and $t\in T$ be two vertices.
	Assume access to an $f$-FDO-${s}T$ and an \mbox{$f$-FDO-$tS$} for $G$
	with respective stretches $\sigma_{sT}$ and $\sigma_{tS}$,
	preprocessing times $\mathtt{P}_{sT}$ and $\mathtt{P}_{tS}$,
	space requirements $\mathtt{S}_{sT}$ and $\mathtt{S}_{tS}$,
	and query times $\mathtt{Q}_{sT}$ and $\mathtt{Q}_{tS}$.
	There exists an $f$-FDO-$ST$ for $G$
	with stretch $\sigma_{sT} + \sigma_{tS} + \min(\sigma_{sT},\sigma_{tS})$, 
	preprocessing time $\mathtt{P}_{sT} + \mathtt{P}_{tS}$,
	space $\mathtt{S}_{sT} + \mathtt{S}_{tS}$,
	and query time $\mathtt{Q}_{sT} + \mathtt{Q}_{tS} + O(1)$.
\end{restatable}

\begin{proof}[Proof.]
	Consider a failure set $F \,{\subseteq}\,E$.
	Let $\D_{sT}(F)$ be the \mbox{$\sigma_{sT}$-approximation} of $\diam(G-F,s,T)$,
	and $\D_{tS}(F)$ the $\sigma_{tS}$-approximation of $\diam(G-F,t,S)$
	obtained from the respective oracles.
	To answer the query $F$, the $f$-FDO-$ST$ outputs
	\begin{equation*}
		\widehat{D} = \D_{sT}(F)+ \D_{tS}(F) + \min\!\Big( \D_{sT}(F), \D_{tS}(F) \Big).
	\end{equation*}
	Consider an arbitrary pair of vertices $(s_0,t_0)\in S\times T$.
	We get
	\begin{align*}
		d_{G-F}(s_0,t_0) &\leq d_{G-F}(s_0,t)+d_{G-F}(t,s)+d_{G-F}(s,t_0)\\
				&\leq \D_{tS}(F) + \min\!\Big( \D_{sT}(F), \D_{tS}(F) \Big) + \D_{sT}(F) = \widehat{D}.
	\end{align*}
	Here, we use that the graph is undirected and therefore
	$d_{G-F}(s_0,t) \le \D_{tS}(F)$ and $d_{G-F}(t,s) \le \D_{sT}(F)$ both hold.
	So $\widehat{D}$ is never smaller than any distance between vertices from $S$ and $T$,
	it never underestimates $\diam(G{-}F,S,T)$.
	
	For the other part of the approximation,
	we insert the stretches of the two estimates $\D_{sT}(F)$ and $\D_{tS}(F)$.
	\begin{multline*}
		\widehat{D} \le \sigma_{sT} \cdot \diam(G{-}F,s,T)\, +  \sigma_{tS} \cdot \diam(G{-}F,t,S)\\
			\qquad\qquad\quad \min\!\Big(\sigma_{sT} \cdot \diam(G{-}F,t,S),\ \sigma_{tS} \cdot \diam(G{-}F,s,T)\Big).
	\end{multline*}
	Since both $\diam(G{-}F,s,T)$ and $\diam(G{-}F,t,S)$ are at most $\diam(G{-}F,S,T)$,
	this shows that $\widehat{D}$ estimates the latter with a stretch of 
	$\sigma_{sT} + \sigma_{tS} + \min(\sigma_{sT},\sigma_{tS})$.
\end{proof}

Combining \Cref{thm:sT-diam} and \Cref{lem:ST-diam-from-sT-diam} 
gives a reduction from the construction of $f$-FDO-$ST$ to that of single-source $f$-DSOs.
However, it results in a data structure with a stretch of $3+6\sigma$,
where $\sigma$ is the original stretch of the single-source $f$-DSO.
We improve this 
by not treating \Cref{lem:ST-diam-from-sT-diam} as a black box.

\stdiamfromsinglesourcedso*

\begin{proof}[Proof.]
	Let vertices $s \,{\in}\, S$ and $t \,{\in}\, T$ be arbitrary.
	The preprocessing algorithm of the \mbox{$f$-FDO-$ST$} uses 
	the single-source $f$-DSO twice, once for the source $s$ and once for the target $t$,
	to construct an $f$-FDO-$sT$ $\D_{sT}$ and an $f$-FDO-$tS$ $\D_{tS}$ both with stretch $1+ 2\sigma$.
	The details are given in \Cref{thm:sT-diam}.
	
	For a set $F \subseteq E$ of at most $f$ edge failures,
	let $\D_{sT}(F)$ and $\D_{tS}(F)$ be the respective $(1+2\sigma)$-approximations 
	of $\diam(G-F,s,T)$ and  $\diam(G-F,t,S)$.
	Further, let $\D_{st}(F)$ be a $\sigma$-approximation of $d_{G-F}(s,t)$,
	obtained from the DSO with the single source $s$.
	The query algorithm outputs the estimate
	\begin{equation*}
		\widehat{D}= \D_{tS}(F)+ \D_{st}(F) + \D_{sT}(F).
	\end{equation*}
	Let $(s_0,t_0)\in S\times T$.	
	By the same arguments as in the proof of \Cref{lem:ST-diam-from-sT-diam},
	we get
	\begin{align*}
		&d_{G-F}(s_0,t_0) \leq d_{G-F}(s_0,t)+d_{G-F}(t,s)+d_{G-F}(s,t_0)\\
			&\quad\quad\leq \D_{tS}(F)+\D_{st}(F)+\D_{sT}(F)\\
			&\quad\quad\leq (1{+}2\sigma) \cdot \diam(G{-}F,t,S) + \sigma \cdot d_{G-F}(t,s)
				+ (1{+}2\sigma) \cdot \diam(G{-}F,t,S).
	\end{align*}
	Again, $\diam(G{-}F,t,S)$, $ d_{G-F}(t,s)$, and $\diam(G{-}F,t,S)$
	are all bounded by $\diam(G{-}F,S,T)$.
	The sum of the stretch coefficients is $2 \cdot (1{+}2\sigma) + \sigma = 2 + 5\sigma$.
\end{proof}

\section{Unrestricted Diameter Oracles}
\label{sec:FDOs}

In this section, we discuss how to construct oracles that estimate the ordinary diameter,
the maximum distance between any two vertices, i.e., $S = T = V$.
As before, we first use all-pairs distance sensitivity oracles to build our data structures 
and afterwards describe the single-source case as an alternative.

\subsection{Diameter Oracles from All-Pairs DSOs}
\label{subsec:diameter_oracles_all-pairs}

There is a simple reduction from the construction of $f$-FDOs to that of \mbox{$f$-DSOs}
that results in a stretch of $1+\sigma$,
where $\sigma$ is the stretch of the distance sensitivity oracle.
The reduction can also handle directed graphs.

\diameteroracle*

\begin{proof}[Proof.]
	Let $\D$ denote the assumed $f$-DSO and, for any two vertices $x,y \in V(G)$
	and set $F \subseteq E(G)$ of at most $f$ failing edges,
	let $\D(u,v,F)$ be the reported \mbox{$\sigma$-approximation} 
	of the replacement distance $d_{G-F}(u,v)$.
	We precompute the original diameter $\diam(G)$ by running Dijkstra's algorithm
	from every vertex in time $O(mn + n^2 \log n)$.
	Our $f$-FDO stores $\D$ and the value $\diam(G)$.
	When queried with a set $F$, it returns
	\begin{equation*}
		\widehat D = \diam(G)+\max_{u,v \in V(F)} \D(u,v,F).
	\end{equation*}
	The query time is $|F|^2  \nwspace \mathtt{Q} = O(f^2 \nwspace \mathtt{Q})$.	
	
	To prove the stretch of the diameter oracle,
	we first show that $\widehat{D}$ is always at least the fault-tolerant diameter,
	namely, we claim  $d_{G-F}(s,t) \leq \widehat{D}$ for all $s,t \in V(G)$.
	Let $\pi$ be a shortest path from $s$ to $t$ in $G$.
	We only need to prove the lower bound when $\pi$
	contains some failing edges in $F$ as otherwise 
	$d_{G-F}(s,t)=d_G(s,t) \leq \diam(G) \leq \widehat D$.
	Let $x_s$ (resp., $x_t$) be the vertex of $V(F)$ that is closest to $s$ (resp., $t$) in $\pi$.
	By choice of $x_s$ and $x_t$ we have that $d_{G-F}(s,x_t)=d_G(s,x_s)$
	and $d_{G-F}(x_t,t)=d_G(x_t,t)$ since no failure occurs on the subpaths 
	$\pi[s,x_s]$ and $\pi[x_t,t]$. 
	Moreover, $d_G(s,x_s)+ d_G(x_t,t)\leq d_G(s,t) \le \diam(G)$.
	The triangle inequality now gives
	\begin{align*}
    	d_{G-F}(s,t) &\leq d_{G-F}(s,x_s)+d_{G-F}(x_s,x_t)+d_{G-F}(x_t,t)\\
     		&\leq d_G(s,x_s)+ \max_{u,v \in V(F)} \D(u,v,F) + d_{G}(x_t,t)\\
				& \leq \diam(G)+ \max_{u,v \in V(F)} \D(u,v,F) = \widehat D.
	\end{align*}
	
	For the upper bound on the stretch, recall that the DSO $\D$ has stretch $\sigma$.
	\begin{multline*}
    	\widehat D = \diam(G)+\max_{u,v \in V(F)} \D(u,v,F) 
    		\leq \diam(G{-}F)+\sigma \cdot \max_{u,v \in V(F)} d_{G-F}(u,v)\\
            	\leq \diam(G{-}F)+\sigma \cdot \diam(G{-}F) = (1 + \sigma) \cdot \diam(G-F).
	\end{multline*}
\end{proof}

\subsection{Diameter Oracles from Single-Source DSOs}
\label{subsec:diameter_oracle_singe-souce}

Recall that single-source distance sensitivity oracles only report the replacement distances from 
one vertex $s$ specified at preprocessing time.
We now use such data structures to built $f$-FDOs.
The following theorem is based on the intuition that in an undirected graph
twice the eccentricity of an arbitrary vertex is a $2$-approximation for the diameter.
We transfer this intuition to directed graphs under edge failures.

\twosinglesourcefdo*

\begin{proof}[Proof.]
	Suppose the graph $G = (V,E)$ is directed
	and let $\cev{G} = (V, \cev{E})$ denote the directed graph 
	in which the orientation of every edge is inverted compared to $G$.	
	Let $s\in V$ be any vertex. 
	We precompute two single-source distance sensitivity oracles $\D$ and $\cev{\D}$
	for the respective graphs $G$ and $\cev{G}$, both with source $s$.
	In other words, for any $v \in V$ and any set $F$ of up to $f$ edge failures,
	$\D(v,F)$ is a $\sigma$-approximation of $d_{G-F}(s,v)$;
	accordingly, $\cev{D}(v,F)$ approximates $d_{G-F}(v,s)$.
	(Of course, if $G$ is undirected, one single-source $f$-DSO suffices.)
	Also, we compute and store the respective eccentricities of $s$ in $G$ and $\cev{G}$,
	that is, $\diam(G,s,V)$ and $\diam(G,V,s)$.
	
	Let $F$ be a set of up to $f$ edge failures in $G$
	and $V(F)$ the set of its end points.
	Our oracles returns the value
	\begin{equation*}
		\widehat{D} = \diam(G,s,V)+\diam(G,V,s)+\max_{v\in V(F)} \D(v,F)+
 \max_{v\in V(F)} \cev{\D}(v,F).
	\end{equation*}
	As the query time of both $\D$ and $\cev{\D}$ is $\mathtt{Q}$,
	the query time of our oracle is $O(f \nwspace \mathtt{Q})$.

	Consider any two vertices $u,w \in V$.
	As before, we first show $d_{G-F}(u,w) \le \widehat{D}$.
	Recall that $\pi_{u,w}$ is the shortest path from $u$ to $w$.
	If $E(\pi_{u,w})\cap F=\emptyset$, then
	\begin{equation*}
		d_{G-F}(u,w)=d_{G}(u,w)\leq d_{G}(u,s)+d_{G}(s,w)\leq \diam(G,V,s)+ \diam(G,s,V)\leq\hat{D}.
	\end{equation*}
	Otherwise, let $x$ (resp., $y$) be the first (resp., last) vertex from $V(F)$ on $\pi_{u,w}$.
	Then the prefix between $u$ and $x$ as well as the suffix between $y$ and $v$
	are non-faulty edge-disjoint subpaths of $\pi_{u,w}$, 
	thus $d_{G-F}(u,x)+d_{G-F}(y,w)\leq d_{G}(u,w)$.
	Using this, we now obtain
	\begin{align*}
		d_{G-F}(u,w) & \leq d_{G-F}(u,x)+d_{G-F}(x,y)+d_{G-F}(y,w)\\
		 & \leq d_{G}(u,w)+d_{G-F}(x,y).
	\end{align*}
	We bound each of the last two terms as follows.
	\begin{align*}
		d_{G}(u,w) & \leq d_{G}(u,s)+d_{G}(s,w)\leq \diam(G,V,s)+ \diam(G,s,V),\\
		d_{G-F}(x,y) & \leq d_{G-F}(x,s)+d_{G-F}(y,s)
			\leq\max_{v\in V(F)}\mathcal{D}(v,F)+\max_{v\in V(F)}\overset{\leftarrow}{\mathcal{D}}(v,F).
	\end{align*}
	So, summing these inequalities, we get $d_{G-F}(u,w)\leq\hat{D}$ as required.
 \end{proof}

\section{Space Lower Bound}
\label{sec:lower-bound}

We now discuss the limits of compression for diameter oracles.
Our lower bound shows that, for sufficiently small stretch,
any such data structure must take a super-linear amount of space.
We prove this using an argument from information theory,
which results in the bound holding unconditionally.
In particular, it remains true even if we allow for an unbounded query time.

\lowerbound*

The stretch requirement stems from the fact that any such $f$-FDO-$ST$ can distinguish
whether the graph $G{-}F$ has fault-tolerant diameter 3 or 5.
The core of the proof of \Cref{thm:lower-bound}
is to establishing the fact that \emph{any} data structure 
with this ability requires $\Omega(n^{3/2})$ bits of space.

\begin{lemma}
\label{lem:lower-bound}
	For infinitely many positive integers $n$, there exists a graph $G = (V,E)$ with $n$ vertices 
	(and two sets $S,T \subseteq V$)
	such that any data structure that decides for all pairs of edges $e,e'\in E$,
	whether $G-\{e,e'\}$ has diameter (resp., $ST$-diameter) 
	at most 3 or at least 5 requires $\Omega(n^{3/2})$ bits of space.
\end{lemma}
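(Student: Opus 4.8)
The plan is to use a standard information-theoretic (encoding) argument: exhibit a family of bipartite-type graphs that encode an arbitrary $N \times N$ $0/1$ matrix, show that the matrix can be recovered from answers to $2$-edge-failure diameter queries, and conclude that the data structure must store $\Omega(N^2) = \Omega(n^{3/2})$ bits since $n = \Theta(N^2)$. Concretely, I would take the vertex sets $A, B, C, D$ already announced in \Cref{table:lb-5by3-stretch}, where $A = \{a[i,j]\}$ and $D = \{d[x,y]\}$ are indexed by pairs and have size $\Theta(N^2)$, while the ``middle'' layers $B = \{b[i,j,k]\}$ and $C = \{c[i,j,k]\}$ carry an extra binary coordinate $k \in \{0,1\}$ whose value will be dictated by the matrix we want to encode. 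The total number of vertices is $n = \Theta(N^2)$, so an $\Omega(N^2)$ space bound is the same as $\Omega(n^{3/2})$.

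First I would fix the graph skeleton given by the edge conditions in the table: within each middle layer, two vertices are adjacent iff exactly one of their first two coordinates agrees (this makes $B$ and $C$ into blow-ups of suitable ``rook's graphs'' of diameter $2$); the $A$--$B$ and $C$--$D$ connections attach each $a[i,j]$ (resp.\ $d[x,y]$) only to the $k=0$ copies with the matching first (resp.\ second) coordinate; and the $B$--$C$ connection is a perfect matching $b[i,j,k] \sim c[i,j,k]$. Then, given a target matrix $R \in \{0,1\}^{N\times N}$ I would \emph{modify} the matching edge for each pair $(i,j)$ according to $R[i,j]$ — e.g.\ include the ``$0$-level'' matching edge $b[i,j,0]c[i,j,0]$ iff $R[i,j]=0$ and otherwise route through the ``$1$-level'' via $b[i,j,1]c[i,j,1]$ — so that in the intact graph every $a[i,j]$--$d[i',j']$ shortest path has length $3$, while the \emph{unique} way to realize distance $3$ between $a[i,j]$ and $d[i,j]$ uses the single middle edge whose level matches $R[i,j]$. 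The key step is then to find, for each pair $(i,j)$, two edges $e, e'$ whose removal destroys exactly the short routes consistent with one value of $R[i,j]$ but not the other, so that $\diam(G - \{e,e'\})$ (or the $ST$-diameter with $S = A$, $T = D$) equals $3$ if $R[i,j]$ has one value and $\ge 5$ if it has the other — hence a stretch-$(\frac53-\eps)$ oracle, which cannot return anything in the open interval $(3,5)$, reveals $R[i,j]$ exactly.

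Having established this, the encoding argument finishes the proof: from an alleged $f$-FDO (or $f$-FDO-$ST$) of stretch $\frac53 - \eps$ and sensitivity $f \ge 2$ for the graph $G = G_R$, one recovers every bit $R[i,j]$ by performing the $O(1)$ queries above, so the map $R \mapsto (\text{data structure contents})$ is injective over the $2^{N^2}$ matrices $R$; therefore the data structure occupies at least $N^2 = \Omega(n^{3/2})$ bits, independently of the query time. The main obstacle — and the part that needs the most care — is designing the edge conditions and the pair of failing edges so that the diameter really jumps all the way from $3$ to $5$ (not merely to $4$): one must verify that after deleting $e, e'$ there is genuinely no length-$4$ detour (through some other middle copy, through the intra-layer rook edges, or backtracking through $A$ or $D$), which is exactly what the somewhat intricate adjacency pattern in \Cref{table:lb-5by3-stretch} — the ``exclusive-or'' conditions inside $B$ and $C$, and the $z=0$/$k=0$ gating on the $A$--$B$ and $C$--$D$ edges — is engineered to rule out; the remaining cases (pairs of vertices within or across $B, C$ that do not involve the modified edges) must be checked to always have distance $\le 3$ even after two failures, so that the ``diameter $= 3$'' side of the dichotomy is robust.
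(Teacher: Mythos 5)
Your encoding plan is conceptually on the right track (encode a matrix in the graph, recover bits via $2$-edge failure queries, count), but the parametrization contains a fatal scaling error. You index $A$ and $D$ by pairs from $[N]$, so $|A| = |D| = \Theta(N^2)$ and $n = \Theta(N^2)$, and you encode an $N \times N$ matrix with $N^2$ bits. But then $N^2 = \Theta(n)$, so the argument only yields $\Omega(n)$ bits, not $\Omega(n^{3/2})$. The displayed equation ``$\Omega(N^2) = \Omega(n^{3/2})$ since $n = \Theta(N^2)$'' is false: with $n = \Theta(N^2)$ one has $n^{3/2} = \Theta(N^3)$. The paper instead takes all indices $i,j,x,y$ from $[\sqrt N]$ (so $n = 6N = \Theta(N)$) and encodes a \emph{three-dimensional} tensor $M \in \{0,1\}^{\sqrt N \times \sqrt N \times \sqrt N}$ with $N^{3/2} = \Theta(n^{3/2})$ entries; each entry $M[i,j,y]$ is decided by one query on a failure pair $e_1 = \{a[i,j],b[i,y,0]\}$, $e_2 = \{c[i,y,0],d[x,y]\}$ with $x=j$. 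The extra free index $y$ (appearing only on the middle layers) is precisely where the super-linear factor comes from; your pair-indexed scheme, with failures parametrized by a single pair $(i,j)$, has too few distinguishable queries.

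A secondary issue is that your reading of \Cref{table:lb-5by3-stretch} differs from what the table actually specifies, in ways that matter for the ``distance jumps from $3$ to $\geq 5$, never to $4$'' verification you correctly flag as delicate. The $B$--$C$ condition $(i=x)\wedge(j=y)$ places no constraint on $k$ or $z$, so it is a $K_{2,2}$ join per $(i,j)$-pair, not a perfect matching. Moreover, the paper does \emph{not} encode $M$ by rewiring the $B$--$C$ layer: the skeleton $H$ is fixed and independent of $M$, and the encoding adds \emph{optional extra} $A$--$B$ edges $\{a[i,j],b[i,y,1]\}$ and $C$--$D$ edges $\{c[i,y,1],d[x,y]\}$ to the $k=1$ level depending on entries of $M$. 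This guarantees that $\diam(G) \leq 3$ for every choice of $M$ and that the optional edges only become distance-relevant once the corresponding $k=0$ detour is removed by the failure set $F$. Your approach of toggling the $B$--$C$ matching according to $R$ would perturb distances already in the intact graph, making the ``$\leq 3$ without failures'' baseline harder to maintain and the case analysis for $G - \{e,e'\}$ considerably messier.
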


\subsection{The graphs \emph{H} and \emph{G}}
\label{subsec:lower_bound_H}

For the remainder of this section, let $n=6N$ 
for some integer $N \ge 4$ that is a perfect square.
We first construct an auxiliary undirected graph $H$ with $n$ vertices.
Let indices $i,j$ range over the set $[\sqrt{N} \nwspace] = \{1, 2, \dots, \sqrt{N} \nwspace \}$
and let $k$ range over $\{0,1\}$.
Instead of the usual index notation like $b_{i,j,k}$, 
we use $b[ \nwspace i,j,k]$ to enhance readability.
We define four pairwise disjoint sets of vertices

\begin{multicols}{2}
	\begin{itemize}
		\item $A=\{a[\nwspace i,j]\}_{i,j \in [\sqrt{N}]}$\vspace*{.5em}
		\item $B= \{b[\nwspace i,j,k]\}_{i,j \in [\sqrt{N}],k \in \{0,1\}}$
		\item $C=\{c[\nwspace i,j,k]\}_{i,j \in [\sqrt{N}],k \in \{0,1\}}$\vspace*{.5em}
		\item $D = \{d[\nwspace i,j]\}_{i,j \in [\sqrt{N}]}$
	\end{itemize}
\end{multicols}

\noindent
The sets $A,B,C,D$ have respective cardinalities $N,2N,2N,$ and~$N$.
The vertex set of $H$ is $V(H)=A\cup B\cup C\cup D$.
The edges in $H$ are shown in \Cref{table:lb-5by3-stretch}.
They are defined using relations between the indices of the participating vertices.
For example, the second line states 
that an edge $\{ b[\nwspace i,j,k],\ b[\nwspace x,y,z] \}$ between elements of $B$ 
exists if and only if  \emph{either} $i = x$ holds \emph{or} $j = y$.
The indices $k, z \in \{0,1\}$ can be arbitrary.
\Cref{fig:lower_bound} illustrates the symmetries of $H$.
Next, we establish some features of the auxiliary graph.

\begin{table}[t]
\caption{
	The edges of the auxiliary graph $H$ in the proof of \Cref{lem:lower-bound}.
	The left column lists certain subsets of $V(H) \times V(H)$,
	the middle column shows the respective pairs of vertices with their indices,
	and the right column gives the condition under which those vertices indeed form an edge.
	The symbol $\oplus$ denotes the exclusive disjunction.
	The edges are intended to be undirected.}
\label{table:lb-5by3-stretch}

\centering

\begin{tabular}{ccc}
	\hline\\[-.75em]
	Set Pair  & Vertex Pair & Edge Condition \\[.25em]
	\hline\\[-.5em]
    $A\times A$ & & no edges  \\[.5em]
    $B\times B$ & $b[ \nwspace i,j,k],\ b[ \nwspace x,y,z]$  
    	& $(i=x) \oplus (j=y)$ \\[.5em]
    $C\times C$ & $c[ \nwspace i,j,k],\ c[ \nwspace x,y,z]$ 
    	& $(i=x) \oplus (j=y)$ \\[.5em]
    $D\times D$ & & no edges \\[.5em]
    $A\times B$ & $a[ \nwspace i,j],\ b[ \nwspace x,y,z]$ 
    	& $(i=x) \wedge (z = 0)$\\[.5em]
    $B\times C$ & $b[ \nwspace i,j,k],\ c[ \nwspace x,y,z]$ 
    	& $(i=x) \wedge (j=y)$ \\[.5em]
    $C\times D$ & $c[ \nwspace i,j,k],\ d[ \nwspace x,y]$ 
    	& $(j=y) \wedge (k = 0)$
\end{tabular}
\end{table}

\begin{figure}[b]
\centering
\includegraphics[scale=0.85, page=5]{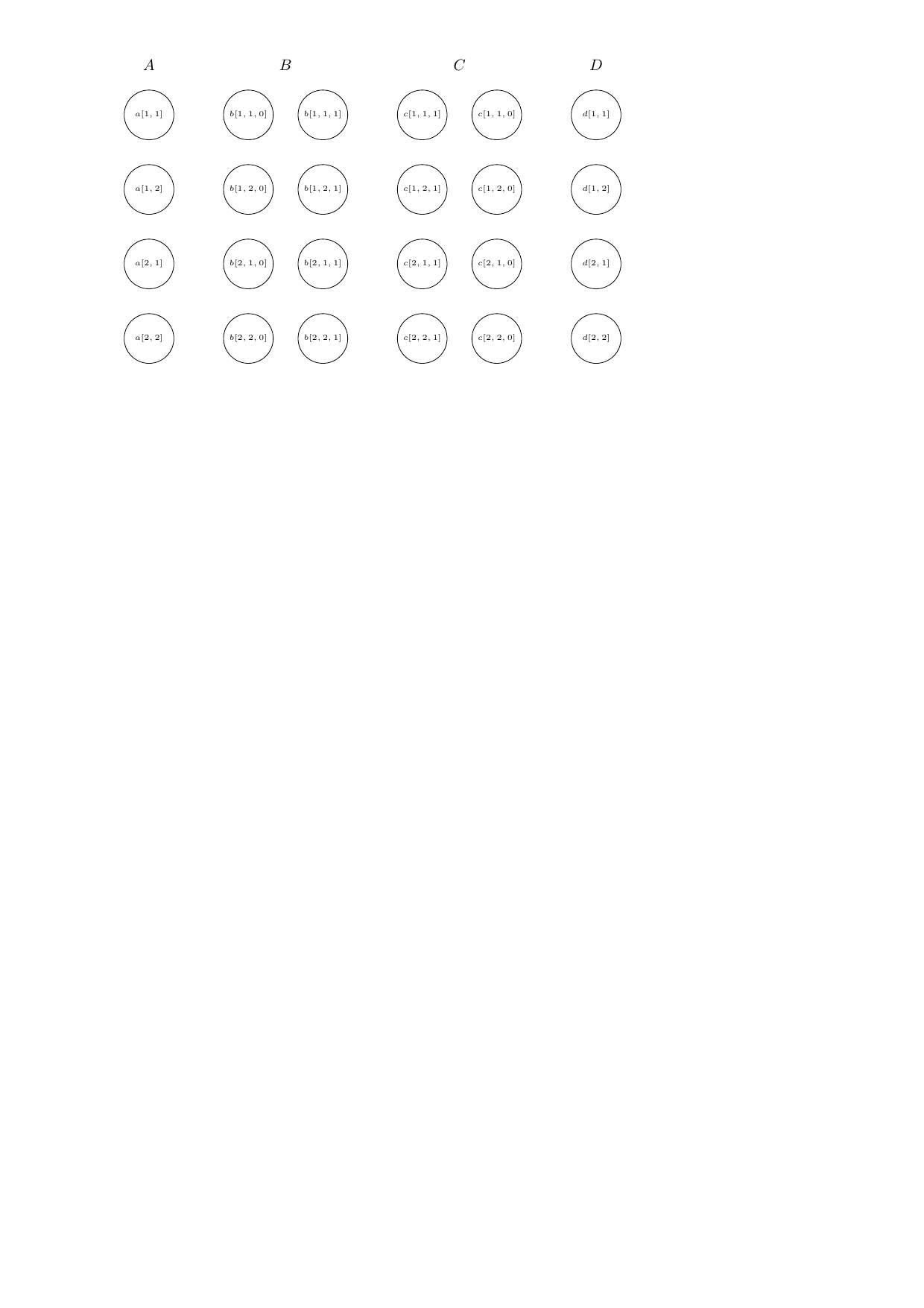}
\caption{%
	The graph $H$ for parameter $N = 4$. The edges between the vertex sets $B$ and $C$
	are drawn in gray to avoid clutter.
}
\label{fig:lower_bound}
\end{figure}

\begin{lemma}
\label{lemma:number_of_edges}
	The graph $H$ has $\Theta(n^{3/2})$ edges.\footnote{
		If one is interested in the leading constants, the proof of \Cref{lemma:number_of_edges} shows 
		that there are $2N$ vertices of degree $5\sqrt{N}{-}2$, $2N$ more of degree $4\sqrt{N}{-}2$,
		and the remaining $2N$ vertices have degree $\sqrt{N}$.
		The handshaking lemma implies that there are 
		$10 N^{3/2}-4N = \frac{5}{3 \sqrt{6}} n^{3/2}- \frac{2}{3}n$ edges.
	} 
\end{lemma}

\begin{proof}[Proof.]
	We claim that the maximum degree in $H$ is $5\sqrt{N}-2$, 
	assumed by the vertices $b[\nwspace i,j,0]$ and $c[\nwspace i,j,0]$ 
	for any $i,j \in [\sqrt{N}\nwspace]$.
	These $2N$ vertices are a constant fraction of $V(H)$.
	The claim thus implies that there are $\Theta(N^{3/2}) = \Theta(n^{3/2})$ edges.
	
	To prove the claim, 
	we fix indices $i^*, j^*$ and calculate the degree of the vertex $v = b[  \nwspace i^*,j^*, 0]$.
	It has no neighbors in $D$ and $\sqrt{N}$ neighbors in $A$, namely, 
	$a[ \nwspace i^*, 1], \dots, a[ \nwspace i^*, \sqrt{N} \nwspace]$.
	The neighbors in $C$ are $c[\nwspace i^*, j^*, 0]$ and $c[\nwspace i^*, j^*, 1]$.
	To count $v$'s neighbors in $B$, first consider the subset that shares its first component,
	$B[\nwspace i^*] = \{b[\nwspace i^*, j,k] \mid j \in [\sqrt{N} \nwspace], k \in \{0,1\}\}$.
	The vertex $v$ is connected to all of those except $b[ \nwspace i^*,j^*,1]$ and $v$ itself,
	which makes for $2 \sqrt{N} - 2$ neighbors.
	Finally, among the vertices in $B{\setminus}B[\nwspace i^*]$ with a different first component,
	$v$ is connected to the set 
	$\{b[\nwspace 1, j^*, k], \dots, b[\nwspace i^*{-}1, j^*, k], b[\nwspace i^*{+}1, j^*, k], \dots,
		b[\sqrt{N}, j^*, k]\}_{k \in \{0,1\}}$,
	contributing the remaining $2(\sqrt{N}{-}1)$ neighbors.
	
	We briefly argue that this is indeed the maximum degree.
	The edges incident to $c[\nwspace i^*, j^*, 0]$ 
	are symmetrical to those of $b[\nwspace i^*, j^*, 0]$.
	The vertex $b[i^*, j^*, 1]$ (resp., $c[\nwspace i^*, j^*, 1]$)
	is missing the $\sqrt{N}$ edges into $A$ (into $D$).
	Also, the vertices in $A$ and $D$ have only degree $\sqrt{N}$ to begin with. 
\end{proof}

\begin{figure}[t]
\centering
\includegraphics{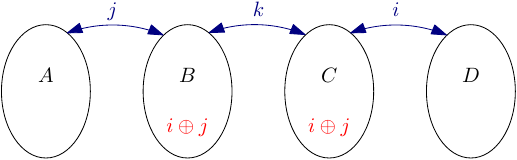}
\caption{%
	Schematic representation of the transitions in graph $H$.
	The label $i$ refers to the first index, $j$ to the second one, and $k$ 
	to the third index (used only in $B$ and $C$).
	The transition from $A$ to $B$ is labeled with a blue $j$ to illustrate that
	an edge $\{a[ \nwspace i,j], b[ \nwspace x,y,0]\}$
	is allowed to jump from any $j$ to any $y$,
	while $i = x$ must remain the same;
	likewise, for the transition from $C$ to $D$ which may vary index $i$.
	Going from $B$ to $C$ must leave both $i$ and $j$ in place, but may change $k$.
	The red labels mark that
	when moving within the sets $B$ or $C$
	either the first or the second index change.
}
\label{fig:schematic_transitions}
\end{figure}

\begin{lemma}
\label{lemma:diam-3}
	The diameter of $H$ is $3$. 
\end{lemma}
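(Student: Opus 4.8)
The plan is to show that every pair of vertices in $H$ is connected by a path with at most $3$ edges, by a case analysis on which of the four blocks $A,B,C,D$ the two endpoints lie in. The structure of the edge relations in \Cref{table:lb-5by3-stretch} suggests that $H$ is ``layered'': $A$ connects to $B$, $B$ to $C$, and $C$ to $D$, with each of $B$ and $C$ internally quite dense (two elements of $B$ are adjacent whenever their first \emph{or} second index agrees, but not both). So the natural approach is: first handle pairs inside a single block, then pairs in adjacent blocks ($A$–$B$, $B$–$C$, $C$–$D$), then pairs in blocks at distance two ($A$–$C$, $B$–$D$), and finally the ``far'' pair $A$–$D$, which should be the one requiring a full length-$3$ path and hence the crux of the argument.

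First I would record the easy internal cases. Inside $B$, given $b[i,j,k]$ and $b[x,y,z]$: if they are non-adjacent then (by the $\oplus$ condition failing) either $i=x$ and $j=y$, or $i\neq x$ and $j\neq y$; in the first sub-case the intermediate vertex $b[i,j,1-k]$ (or any $b[i,y',k']$ with $y'\neq j$) gives a path of length $2$, and in the second sub-case $b[i,y,\cdot]$ is adjacent to both, again length $2$. The same works verbatim for $C$. Inside $A$ (an independent set), pick any $b[i,j,0]$ adjacent to $a[i,j']$: since $a[i,j]$–$b[i,j',0]$ and $a[x,y]$–$b[x,y',0]$ are edges, and $b[i,j',0]$–$b[x,y',0]$ is an edge whenever $i=x$ or $j'=y'$ (which we can arrange, or bridge in one more step), we get distance at most $3$; symmetrically for $D$. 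For adjacent blocks: any $a[i,j]$ reaches $b[i,j',0]$ in one step and then any target in $B$ within $2$ more by the internal-$B$ bound, but that is length $3$; for the $B$–$C$ and $C$–$D$ interfaces one argues analogously. The blocks at distance two, $A$–$C$ and $B$–$D$, are handled by going $a[i,j]\to b[i,j',0]\to c[i,j',0']\to$ (target), checking the index constraints line up.

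The main obstacle is the $A$–$D$ pair, which must be connected by a path using exactly the chain $A\to B\to C\to D$. Given $a[i,j]$ and $d[x,y]$, I would try the path
\[
a[i,j] \;-\; b[i,\,y,\,0] \;-\; c[i,\,y,\,0] \;-\; d[x,\,y].
\]
Checking the three edges against \Cref{table:lb-5by3-stretch}: $\{a[i,j],b[i,y,0]\}$ requires (first index $i=i$) and ($k$-coordinate $z=0$) — satisfied. The edge $\{b[i,y,0],c[i,y,0]\}$ requires the two first indices equal and the two second indices equal — satisfied. The edge $\{c[i,y,0],d[x,y]\}$ requires (second index $y=y$) and ($k$-coordinate $=0$) — satisfied. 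So this explicit $3$-edge path works for \emph{every} $(i,j)$ and $(x,y)$, regardless of whether $i=x$; hence $d_H(a[i,j],d[x,y])\le 3$. One then remarks that all the shorter cases above also cap at $3$, and the lemma follows. I do not expect any genuinely hard step beyond carefully reading off the edge conditions; the only care needed is to make sure the chosen intermediate vertices actually exist in $B$ and $C$ (they do, since $i,j,x,y\in[\sqrt N]$ and $k\in\{0,1\}$ range freely) and that no case is missed in the block-pair enumeration.
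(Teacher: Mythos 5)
Your proposal takes essentially the same approach as the paper: an explicit case analysis over the block pairs $A,B,C,D$, with the crux being the length-$3$ path
$a[i,j]\,{-}\,b[i,y,0]\,{-}\,c[i,y,0]\,{-}\,d[x,y]$ for an $A$--$D$ pair, which is exactly the paper's path. One slip worth flagging: in the internal-$B$ case where $i=x$ and $j=y$ (hence $k\neq z$), the suggested intermediate $b[i,j,1-k]$ does not work --- indeed it \emph{is} the target vertex $b[x,y,z]$, and $b[i,j,k]$ is not adjacent to it, since both index tests are equalities and the $\oplus$ condition in \Cref{table:lb-5by3-stretch} therefore fails. The alternative you offer in the same parenthetical, $b[i,y',k']$ with $y'\neq j$, is a genuine intermediate and gives a valid length-$2$ path, so the argument still goes through. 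You also once paraphrase the $B\times B$ edge rule as ``$i=x$ or $j'=y'$''; it is an exclusive or, and the path you then choose does satisfy the stricter condition, but the statement as written is imprecise.
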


\begin{proof}[Proof.]
	We give explicit shortest paths of length at most $3$ between all possible pairs of vertices.
	The edges in $H$ impose rules how the indices of a vertex change when transitioning between the sets
	$A$, $B$, $C$, and $D$.
	\Cref{fig:schematic_transitions} serves as a reference.
	Recall that the edges are undirected, thus all paths below work both ways.
	For some index $i \in [\sqrt{N} \nwspace ]$, we use the symbol $\overline{i}$ 
	for an arbitrary \emph{other} index from $[\sqrt{N} \nwspace ] {\setminus} \{i\}$.
	Since $N \ge 4$, we have $\sqrt{N} \ge 2$ and such an alternative index exists.

	\begin{itemize}
		\item Consider (distinct) vertices $a[ \nwspace i,j],a[ \nwspace x,y]\in A$.
			They are joined by any path of the form
	 		$(a[ \nwspace i,j], \, b[ \nwspace i,y,0], \, b[ \nwspace x,y,0], \, a[ \nwspace x,y])$.
	 		If the respective first indices $i \neq x$ are different, the path has length $3$;
	 		otherwise, the inner vertices $b[ \nwspace i,y,0]$ and $b[ \nwspace x,y,0]$ are the same
	 		and the path shortens to length $2$.\vspace*{.5em}
	 	\item The case $d[ \nwspace i,j],d[ \nwspace x,y]\in D$,
	 		is symmetric to that of two vertices from $A$.\vspace*{.5em}
	 	\item  Two vertices from $B$, or from $C$ respectively, have distance at most $2$.
	 		Let $b[ \nwspace i,j,k],b[ \nwspace x,y,z] \,{\in}\, B$ be two different vertices.
	 		If exactly one out of the equalities $i = x$ and $j = y$ hold,
	 		they have a direct edge between them.
	 		If we have both $i = x$ and $j = y$, 
	 		it must be that the vertices differ exactly the third component $k \neq z$.
	 		Let $\overline{i}$ be any index different from $i$.
	 		The two vertices are joined by the path
	 		$(b[ \nwspace i,j,k], \, b[ \nwspace \overline{i},j,k], \, b[ \nwspace i,j,z])$.	 		
	 		Finally, if $i \neq x$ and $j \neq y$.
			the path is $(b[ \nwspace i,j,k], \, b[ \nwspace x,j,k], \, b[ \nwspace x,y,z])$.
			The argument for vertices  $c[ \nwspace i,j,k],c[ \nwspace x,y,z] \in C$
			is the same.\vspace*{.5em}
		\item For vertex pairs $(a[ \nwspace i,j],b[ \nwspace x,y,z])\in A \times B$,
			the key observation is that any edge inside of $B$ changes exactly 
			one of the first two indices.
			If $i \neq x$, the path is 
			$(a[ \nwspace i,j], \, b[ \nwspace i,y,0], \, b[ \nwspace x,y,z])$.
			Otherwise, we have $i = x$.
			Hence, the target vertex is $b[ \nwspace x,y,z] = b[ \nwspace i,y,z]$
			and the respective path is
			$(a[ \nwspace i,j], \, b[ \nwspace i,\overline{y},0], \, b[ \nwspace i,y,z])$.
			In summary, the distance from $A$ to $B$ is at most $2$.
			The pairs $(d[ \nwspace i,j],c[ \nwspace x,y,z]) \in D \times C$
			are handled symmetrically.\vspace*{.5em}
		\item The vertex pairs $(a[ \nwspace i,j],c[ \nwspace x,y,z]) \,{\in}\, A \,{\times}\, C$
			have shortest paths of the form
			$(a[ \nwspace i,j], \, b[ \nwspace i,y,0], \, c[ \nwspace i,y,z], \, c[ \nwspace x,y,z])$.
			Note that if $i \,{=}\, x$ the last two vertices are the same.
			Same holds for paths from $D$ to $B$.\vspace*{.5em}
		\item  Pairs $(a[ \nwspace i,j],d[ \nwspace x,y]) \,{\in}\, A \,{\times}\, D$: 
			paths $(a[ \nwspace i,j], \, b[ \nwspace i,y,0], \, c[ \nwspace i,y,0], \, d[ \nwspace x,y])$.\vspace*{.5em}
		\item Pairs $(b[ \nwspace i,j,k],c[ \nwspace x,y,z]) \,{\in}\, B \,{\times}\, C$:
			$(b[ \nwspace i,j,k], b[ \nwspace x,j,k], c[ \nwspace x,j,k], c[ \nwspace x,y,z])$,
			possibly shortened if consecutive vertices are the same. \qedhere
	\end{itemize}
\end{proof}

\paragraph*{The construction of graph $G$}

In the next step of the proof of \Cref{lem:lower-bound},
we show how to encode the entries of a large class\footnote{
	The relevant subclass of matrices will turn out to be all
	$M \in \{0,1\}^{\sqrt N \times \sqrt N \times \sqrt N}$
	that are constraint to have entry $M[\nwspace i, j, \ell \nwspace ] = 0$
	whenever $i = j$ or $j = \ell$.
}
of $\sqrt N \,{\times}\, \sqrt N \,{\times}\, \sqrt N$
binary matrix (tensors) in the fault-tolerant diameter of some supergraph $G \supseteq H$.
In other words, let $M \in \{0,1\}^{\sqrt N \times \sqrt N \times \sqrt N}$ be a matrix
and $\ell$ be another index ranging over $[\sqrt{N} \nwspace]$.
We construct a graph $G$ that, for each triple $(i,j,\ell)$, has a set $F$ 
of two edges such that one can infer the entry $M[\nwspace i,j,\ell \nwspace]$
from $\diam(G{-}F)$.
 
The graph $G$ has the same vertex set $V = V(H)$ as $H$.
It also inherits all edges of $H$,
but additionally contains the following edges that depend on $M$.

\begin{itemize}
	\item For all $i,j,\ell \in [\sqrt N \nwspace]$ with $M[ \nwspace i,j,\ell \nwspace]=1$,
		add the edges $\{ a[ \nwspace i,j],b[ \nwspace i,\ell,1]\}$ 
		and $\{c[ \nwspace i,\ell,1],d[ \nwspace j,\ell \nwspace]\}$ to $G$.
\end{itemize}

\noindent
Note that $O(N^{3/2})$ edges are added 
and the diameter of $G$ is also at most $3$.

\subsection{Fault-Tolerant Diameter Revealing Matrix Entries}
\label{subsec:lower_bound_matrix}

Fix four indices $i,j,x,y \in [\sqrt N \nwspace]$ 
such that $i \neq x$ and  $j \neq y$.
We define two sets $F,E^*$, each one containing exactly two pairs of vertices from $V$.
The first one $F$ in fact contains two edges that are present in $H$. 
Let $e_1 = \{ a[ \nwspace i,j], b[ \nwspace i,y,0] \}$ 
and $e_2 = \{ c[ \nwspace i,y,0], d[ \nwspace x,y] \}$
and $F = \{e_1, e_2\}$.
The elements of the second set $E^*$ are only edges of $G$ if
the entries $M[ \nwspace i,j,y]$ and $M[ \nwspace i,x,y]$ are both $1$. 
In more detail, let $E^*$ be the set comprising the two pairs
$e^*_1 = \{ a[ \nwspace i,j],b[ \nwspace i,y,1]\}$ 
and $e^*_2 = \{c[ \nwspace i,y,1],d[ \nwspace x,y]\}$.

The next two lemmas are crucial in showing how the fault-tolerant diameter
of $G$ depends on $M[ \nwspace i,j,y]$ and $M[ \nwspace i,x,y]$.
We would like to highlight the fact that both of them
hold regardless of the other entries in $M$.

\begin{lemma}
\label{lemma:diam-bound-1}
	For any four indices $i,j,x,y \in [\sqrt N \nwspace ]$ such that $i \neq x$ and  $j \neq y$,
	if the edges in $E^*$ are absent from $G$,
	then the diameter of $G - F$ is at least $5$.
\end{lemma}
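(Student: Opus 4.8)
The plan is to exhibit a specific pair of vertices whose distance in $G - (F \cup F')$ is at least $5$, namely the pair $a[i,j]$ and $d[x,y]$ for the chosen indices with $i \neq x$ and $j \neq y$. First I would argue that in the undamaged graph $G$, essentially the only ``short'' routes from the $A$-side to the $D$-side pass through the gadget vertices we have deleted. Concretely, every edge incident to a vertex of $A$ goes to $B$ (and only to $b[i,\cdot,0]$ via $H$, or to $b[i,\cdot,1]$ via an $M$-edge), and symmetrically every edge incident to $D$ goes to $C$ (to $c[\cdot,y,0]$ via $H$, or to $c[\cdot,y,1]$ via an $M$-edge). So any $a[i,j]$--$d[x,y]$ path must traverse $A \to B \to \dots \to C \to D$, and since the only $B$--$C$ edges connect $b[p,q,r]$ to $c[p,q,r]$ with matching first two coordinates, the path has the form $a[i,j] \to b[i,\alpha,\beta] \to \dots \to c[\gamma,\delta,\eta] \to d[x,y]$ where the step into $B$ forces the first coordinate $i$, the step out of $C$ forces the second coordinate $y$, and a $B$--$C$ crossing forces the first two coordinates to agree. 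Moving inside $B$ or inside $C$ changes exactly one of the first two coordinates per edge.

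Next I would do the case analysis on how a hypothetical path of length $\le 4$ could be structured, tracking coordinates. A length-$3$ path $a[i,j] \to b[i,y,\ast] \to c[i,y,\ast] \to d[x,y]$ would use exactly the edges $e_1, e_1'$ (into $B$), the $B$--$C$ edge $\{b[i,y,\ast],c[i,y,\ast]\}$, and $e_2$ or $e_2'$ (out of $C$) — but both $b$-edges from $a[i,j]$ and both $c$-edges to $d[x,y]$ have been removed by $F \cup F'$, so no length-$3$ path survives. A length-$4$ path has one extra internal step. Since the $A$-endpoint forces us into $b[i,\cdot,\cdot]$ and the $D$-endpoint forces us out of $c[\cdot,y,\cdot]$, and the removed edges block the direct transitions $a[i,j]\to b[i,y,\ast]$ and $c[i,y,\ast]\to d[x,y]$, the extra step must be used either inside $B$ (to reach some $b[i,y',\ast]$ with $y' \neq y$, then cross to $c[i,y',\ast]$, but then the final edge to $d[x,y]$ needs second coordinate $y$, contradiction since $y'\neq y$ and we've exhausted our budget), or symmetrically inside $C$, or as a second $B$--$C$ crossing, or as an $A$-- or $D$-- revisit. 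In each sub-case the coordinate constraints ($i$ fixed at the $B$-entry, $y$ required at the $C$-exit, first-two-coordinates-equal at every $B$--$C$ crossing, one coordinate change per internal $B$- or $C$-edge) cannot all be met within $4$ edges once the four forbidden edges are gone — the deletions precisely sever the two length-$1$ ``shortcuts'' on each side that a length-$\le 4$ path would need. Hence $d_{G-(F\cup F')}(a[i,j], d[x,y]) \ge 5$.

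The main obstacle will be making the length-$4$ exclusion airtight: one must enumerate the possible ``shapes'' of a walk $A \to B^{+} \to C^{+} \to D$ (with possibly a detour back into $A$ or $D$, or a double $B$--$C$ crossing) and check that each shape either needs one of the four deleted edges or violates a coordinate equation. I expect this to be a finite but slightly fiddly case check; the key structural facts that make it go through are (i) the bipartite-like layering $A$--$B$--$C$--$D$ with no $A$--$C$, $A$--$D$, $B$--$D$ edges, (ii) the rigidity of $B$--$C$ edges (both of the first two coordinates preserved), and (iii) that exactly the edges $\{a[i,j],b[i,y,0]\}$, $\{a[i,j],b[i,y,1]\}$, $\{c[i,y,0],d[x,y]\}$, $\{c[i,y,1],d[x,y]\}$ — the only way to jump from $a[i,j]$ into the ``$y$-column'' of $B$ in one step, resp.\ out of the ``$y$-column'' of $C$ into $d[x,y]$ — are all in $F \cup F'$. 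Combined with Lemma~\ref{lemma:diam-3} and the remark that $\diam(G) \le 3$, this will give the $3$-versus-$5$ gap needed for the counting argument behind Lemma~\ref{lem:lower-bound}.
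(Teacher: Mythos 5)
Your proposal is correct and takes essentially the same approach as the paper: both target the pair $(a[i,j],d[x,y])$, exploit the layering $A$--$B$--$C$--$D$ and the rigidity of $B$--$C$ edges, and argue that the removal of $F\cup F'$ leaves no path of length $\le 4$ because the two required changes in the first two coordinates cannot be made. One small caution: in your ``inside $B$'' sub-case the assertion $y'\neq y$ is not forced (an internal $B$-step may set the second coordinate to $y$, in which case the contradiction comes from the deleted $C$--$D$ edges rather than from a coordinate mismatch); the paper avoids this sub-case split by simply counting coordinate changes --- the single $B$--$C$ crossing yields zero and at most one remaining internal $B$- or $C$-edge yields one, yet two are needed.
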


\begin{proof}[Proof.]
	Suppose none of the edges in
	$E^* = \{\{ a[ \nwspace i,j],b[ \nwspace i,y,1]\}, \{c[ \nwspace i,y,1],d[ \nwspace x,y]\}\}$ 
	are present in $G$.
	We show that the distance between the vertices $a[ \nwspace i,j]$ and $d[ \nwspace x,y]$ in $G - F$
	is at least $5$.
	Observe that, when moving from one vertex to the other,
	both the first and the second index must change.
	The high-level idea of this proof is that we need sufficiently many edges on any path
	in order to accompany those index changes.	
	
	To reach a contradiction, assume that there exists three (not necessarily distinct)
	vertices $u,v,w \in V$ such that
	$P=(a[ \nwspace i,j], \, u, \, v, \, w, \, d[ \nwspace x,y])$
	is a path of length at most $4$. 
	$P$ must pass across sets $A\rightarrow B$, $B\rightarrow C$, and $C\rightarrow D$.
	Recall that the edge $\{ a[ \nwspace i,j], b[ \nwspace i,y,0] \} \in F$ failed in $G-F$.
	Since $E^*$ is absent, the entries $M[\nwspace i, j, y \nwspace]$ and $M[\nwspace i, x, y \nwspace]$
	are both $0$.
	However, other 1-entries of $M$ may have resulted in additional neighbors of $a[ \nwspace i,j]$.
	In summary, the neighborhood $N_{G-F}( a[ \nwspace i,j])$ is
    \begin{equation*}
    	\Big\lbrace b[ \nwspace i,\overline{y},0] 
			\mid \overline{y} \in [\sqrt{N} \nwspace]{\setminus}\{y\} \Big\rbrace         
            \cup \Big\lbrace b[ \nwspace i,\overline{y},1] 
            	\mid \overline{y} \in [\sqrt{N} \nwspace] {\setminus} \{y\}
            	\wedge M[ \nwspace i,j,\overline{y} \nwspace] = 1 \Big\rbrace.
    \end{equation*}
    Therefore, the first edge $\{a[ \nwspace i,j], u\}$ of $P$
	cannot change the index $i$
	and we have $u = b[\nwspace i, \overline{y}, k]$ 
	where $\overline{y}$ is \emph{different} from $y$, and $k$ may be $0$ or $1$ 
	depending on $M$.
	
	Now note that also the edge $\{ c[ \nwspace i,y,0], d[ \nwspace x,y] \} \in F$ failed.
	Applying the same reasoning to the last edge $\{w,d[ \nwspace x,y]\}$ of the path $P$,
	shows that $w$ must lie in $C$ and have the form $w = c[\, \overline{i} , y, z]$,
	where again $z \in \{0,1\}$ depending on the entries of $M$.
	
	At least one of the edges $\{u,v\}$ or $\{v,w\}$ passes from $B$ to $C$.
	Let us first assume that it is $\{u,v\}$.
	This edge must have already been present in $H$
	and cannot change \emph{any} of the first two indices,
	that is, $v = c[\nwspace i, \overline{y}, \xi]$ for some $\xi \in \{0,1\}$.
	This implies that the edge
	$\{v,w\} = \{ c[\nwspace i, \overline{y}, \xi], c[\, \overline{i} , y, z]\}$
	connects two vertices in $C$ and must change the first two indices simultaneously.
	However, by definition all edges within $C$ change exactly one out of the first two.
	
	If instead the edge $\{v,w\}$ passes from $B$ to $C$,
	then a symmetric argument shows that $\{u,v\}$ must change the first two indices
	while staying in $B$, which again is impossible.
	This also covers the case where $P$ has even fewer than $4$ edges
    as still one of those must eventually pass from $B$ to $C$
    and the other ``edge'' degenerates to a single vertex.
\end{proof}

\begin{lemma}
	\label{lemma:diam-bound-2}
	If the edges in $E^*$ are present,
	then the diameter of $G - F$ is $3$.
\end{lemma}

\begin{proof}[Proof.]
	The proof is essentially the same as that of \Cref{lemma:diam-3}.
	The only difference is that whenever the edge 
	$e_1 = \{a[ \nwspace i,j], b[ \nwspace i,y,0]\} \,{\in}\, F$ would be used,
	the respective path instead traverses $e^*_1 = \{a[ \nwspace i,j], b[ \nwspace i,y,1]\} \in E^*$.
	Likewise, every occurrence of 
	$e_2 = \{ c[ \nwspace i,y,0], d[ \nwspace x,y] \}$
	is replaced by $e^*_2 = \{ c[ \nwspace i,y,1], d[ \nwspace x,y] \}$.
\end{proof}

We now finish the proofs\footnote{
	Part of the proof of \Cref{lem:lower-bound} is to reconcile the $i,j,x,y$ index naming scheme 
	from the start of \Cref{subsec:lower_bound_matrix} with the $i,j,\ell$ scheme used for $M$.
	The former is chosen to align the proofs of \Cref{lemma:diam-bound-1,lemma:diam-bound-2}
	with the definition of graph $H$.
	The latter emphasizes the three degrees of freedom when addressing the entries of $M$.
} of \Cref{lem:lower-bound} and, in turn, \Cref{thm:lower-bound}.
Suppose there exists a data structure for the graph $G$ with sensitivity $f \ge 2$ 
that distinguishes whether the fault-tolerant diameter is bounded by $3$ or at least $5$.
We can use it to infer the entry $M[ \nwspace i,j,\ell\nwspace]$
for any triple $(i,j,\ell) \in [\sqrt{N} \nwspace]^3$ of indices 
such that $i$ and $j$ differ from each other, and $j$ and $\ell$ differ.
In the notation used at the start of \Cref{subsec:lower_bound_matrix},
we set $x = j$ and $y = \ell$.
The resulting failure set is therefore
\begin{multline*}
	F = \Big\lbrace \{ a[\nwspace i,j], b[\nwspace i, y, 0]\}, \{c[\nwspace i,y,0], d[ \nwspace x,y]\}
		\Big\rbrace\\
		= \Big\lbrace \{ a[\nwspace i,j], b[\nwspace i, \ell, 0]\}, 
			\{c[\nwspace i, \ell,0], d[ \nwspace j,\ell \nwspace ]\} \Big\rbrace
\end{multline*}

\noindent
Our assumptions $i \,{\neq}\, j$ and $j \,{\neq}\, \ell$
ensure that $i \,{\neq}\, x$ and $j \, {\neq}\, y$,
the \Cref{lemma:diam-bound-1,lemma:diam-bound-2} thus apply.
If $M[\nwspace i,j,\ell \nwspace] = 1$,
then $M[i,j,y] = M[i,x,y] = 1$ since they are all the same entry,
and the respective edge set $E^*$ is present in $G$.
\Cref{lemma:diam-bound-2} implies that $\diam(G{-}F) \ {=}\ 3$.
If conversely $M[\nwspace i,j,\ell \nwspace] = 0$,
then $M[i,j,y] = M[i,x,y] = 0$, $E^*$ is absent, and  $\diam(G{-}F) \ge 5$.
For the assertion of \Cref{lem:lower-bound} about the $ST$-diameter (instead of the unrestricted diameter),
we just choose $S = A$ and $T=D$.

There are $2^{\sqrt{N} (\sqrt{N}-1)^2} =  2^{\Omega(N^{3/2})} = 2^{\Omega(n^{3/2})}$ 
possibilities of fixing the entries in $M$ whose indices $(i,j,\ell) \in [\sqrt{N} \nwspace]^3$ 
satisfy $i \neq j$ and $j \neq \ell$.
The data structure in question must therefore take $\Omega(n^{3/2})$ bits of space.
The proof of \Cref{thm:lower-bound} is completed by observing that any $f$-FDO-$ST$
with stretch better than $5/3$ is indeed capable of distinguishing
those two cases.

\subsection*{Acknowledgments}

The authors thank the anonymous Algorithmica reviewers 
for their many helpful comments, and for sharing constructions
that improved the space overhead, query time, and preprocessing time 
of \Cref{thm:st-diameter-where-st-are-sets}.

\subsection*{Statement of Competing Interests}

The authors have no relevant financial or non-financial interests to disclose.

\bibliography{algorithmica_bib}

\end{document}